\documentclass[12pt,centertags,reqno]{amsart}

\usepackage[foot]{amsaddr}
\usepackage{latexsym}
\usepackage[english]{babel}
\usepackage[T1]{fontenc}
\usepackage[utf8]{inputenc}
\usepackage[numbers]{natbib}
\usepackage{amssymb}
\usepackage{fancyhdr}
\usepackage{url}
\usepackage{hyperref}
\usepackage{verbatim}
\usepackage{leftidx}
\usepackage{color,graphicx}

\usepackage{mathrsfs, mathtools}
\usepackage{stmaryrd}

\usepackage{marvosym}
\mathtoolsset{showonlyrefs}

\usepackage{soul}

\usepackage{enumerate} 
\usepackage{enumitem} 


\textwidth = 17.60cm
\textheight = 22.00cm
\oddsidemargin = -0.2in
\evensidemargin = -0.2in
\setlength{\parindent}{0pt}
\setlength{\parskip}{5pt plus 2pt minus 1pt}

\numberwithin{equation}{section} \makeatletter
\renewcommand{\subsection}{\@startsection
{subsection}{2}{0mm}{\baselineskip}{-0.25cm}
{\normalfont\normalsize\bf}} \makeatother


\newtheorem{theorem}{Theorem}[section]
\newtheorem{lemma}[theorem]{Lemma}

\newtheorem{definition}[theorem]{Definition}
\newtheorem{remark}[theorem]{Remark}
\newtheorem{proposition}[theorem]{Proposition}

\newtheorem{ass}[theorem]{Assumption}

\def \A {\mathcal A}

\def \D {\mathcal D}

\def \F {\mathcal F}

\def \L {\mathcal L}

\def \P {\mathbf P}

\def \I {{\mathbf 1}}

\def \R {\mathbb R}
\def \bF {\mathbb F}

\def \bN {\mathbb N}

\newcommand{\pd}[2]{\dfrac{\partial#1}{\partial#2}}
\newcommand{\pds}[2]{\dfrac{\partial{^{2}}#1}{\partial{#2}^{2}}}
\newcommand{\pdsm}[3]{\dfrac{\partial{^{2}}#1}{\partial{#2} \partial{#3}}}

\newcommand{\ud}{\mathrm d}
\newcommand{\ds}{\displaystyle}
\newcommand{\esp}[2][\mathbb E] {#1\left[#2\right]}

\hyphenation{ortho-go-nal
e-xi-stence mi-ni-mi-za-tion re-pre-sen-tation pri-cing ge-ne-ra-li-za-tion ge-ne-ra-ted pro-ba-bi-li-ty stra-te-gy hy-po-the-ti-cal pro-ducts
na-tu-ral di-scus-sed a-ve-ra-ge mi-ni-mi-zes e-sta-bli-shed de-com-po-si-tion mi-ni-mal nu-m\`{e}-ra-ire re-pre-sen-ted po-si-ti-vi-ty}

\sloppy

\begin{document}
	
	\author[K.~Colaneri]{Katia Colaneri}\address{Katia Colaneri, Department of Economics and Finance, University of Rome Tor Vergata, Via Columbia 2, 00133 Rome, Italy.}\email{katia.colaneri@uniroma2.it}

	\author[A.~Cretarola]{Alessandra Cretarola}\address{Alessandra Cretarola, Department of Mathematics and Computer Science, University of Perugia, Via L. Vanvitelli, 1, 06123 Perugia, Italy.}\email{alessandra.cretarola@unipg.it}

	\author[B.~Salterini]{Benedetta Salterini}\address{Benedetta Salterini, Department of Mathematics and Computer Science (DIMAI), University of Firenze, Viale Morgagni, 67/a - 50134 Firenze, Italy.}\email{benedetta.salterini@unifi.it}

\title[Investment and reinsurance under forward preferences]{Optimal investment and proportional reinsurance in a regime-switching market model under forward preferences}

	\date{}
	\maketitle

\begin{abstract} \begin{center}
In this paper we study the optimal investment and reinsurance problem of an insurance company whose investment preferences are described via a forward dynamic exponential utility in a regime-switching market model. Financial and actuarial frameworks are dependent since stock prices and insurance claims vary according to a common factor given by a continuous time finite-state Markov chain. We construct the value function and we prove that it is a forward dynamic utility. Then, we characterize the investment strategy and the optimal proportional level of  reinsurance. We also perform numerical experiments and provide sensitivity analyses with respect to some model parameters. \end{center}\end{abstract}

\noindent {\bf Keywords}: Forward dynamic utility, Optimal investment, Optimal proportional reinsurance, Stochastic factor-model, Stochastic optimization.

\noindent {\bf 2020 MSC}: 60G55, 60J60, 91B30, 93E20.

\noindent {\bf JEL}: C61, G11, G22

\section{Introduction}

In this paper we study the optimal reinsurance-investment problem in a regime-switching model for an insurance
company, whose preferences are described by {\em forward dynamic utilities}.
Under this forward-looking approach the agents can adjust their (random) preferences over time, according to the available information. One of the advantages is that it allows for a significant flexibility in incorporating changing market opportunities and agents’ attitudes in a dynamically consistent manner. This means to define the forward performance process as an adapted stochastic process parameterized by wealth and time, and constructed ``forward in time''.
The pioneers of the forward investment performance approach are the papers of \citet{musiela2004example,musiela2008optimal,MZportchoice} (see also \citet{aghalith, chen} for more recent results).
  Given the initial utility function as input parameter, the forward dynamic utility for an arbitrary upcoming investment horizon is specified by means of the solution to a suitable stochastic control problem, such that the supermartingale property holds for any admissible strategy, and the martingale property holds along the optimal strategy. The latter, allows to derive a Hamilton-Jacobi-Bellman (in short HJB) equation which permits to characterize the value process.
\\
The main difference with the classical literature of backward preferences is that it does not require to set at the initial time a
utility criterion to hold at the end of the investment horizon, say $T$. Under backward preferences, when entering the market, agents must
define their risk profile at the horizon time and consequently, the portfolio is built accordingly and they cannot adapt it to variations in market conditions or update risk preferences. {The property that a clearly specified risk profile can dynamically be targeted is an advantage of forward utility preferences.}
In the actuarial framework, optimal reinsurance and investment problems have been widely investigated for different risk models and under different criteria, especially via expected utility maximization, ruin probability minimization or mean-variance criteria. However, to the best of our knowledge, {all these contributions} only employ classical backward utilities preferences, see e.g. \citet{liu2009optimal,gu2017optimal,bec,cao2020optimal,ceci2021} and references therein. A recent application of the
forward utility approach to insurance can be found in \citet{chong2019pricing}, where {an} evaluation problem of equity-linked life insurance contracts is investigated.\\
The main novelty of this paper is to consider an investment-reinsurance problem in a regime-switching market model for an insurance company whose utility preferences are described by a forward dynamic exponential utility.
The modeling framework proposed by our paper takes into consideration possible dependence between the financial and the insurance markets via the presence of a continuous time finite state Markov chain affecting the asset price dynamics, as well {as} the claim arrival intensity. This additional stochastic factor may represent some economic or geographical conditions, natural events or pandemics, that can have a considerable impact on certain lines of business of insurance companies and also affect returns of investment portfolios. {The economic effects of catastrophic events, climate changes and pandemics, as for instance the Covid-19, on both the insurance/reinsurance business and the financial market is analyzed by a recent, but quite rich, bunch of literature, see, e.g. \citet{tasselaar,baek,just,wang}.}
In our paper we address this  modeling issue by assuming that all these exogenous events are aggregated to create different regimes.
Although considering regime-switching risk models is not unusual, see, e.g.,  \citet{liu2013optimal,jang2015optimal,chen2016constrained}, to the best of our knowledge our contribution is the first which accounts for forward dynamic preferences under dependence between the actuarial and insurance framework.
The insurance company can allocate its wealth among a money market account and a risky security,
and can buy a proportional reinsurance  to hedge its insurance risks. The risky asset price {process} follows a regime-switching constant elasticity of variance (CEV) model. As observed in \citet{ma2018}, empirical {evidence suggests} that the classical CEV model represents a good alternative to stochastic volatility {models} to describe the risky asset price, which turns out to be correlated with volatility. The dependence of the coefficients on the continuous time Markov chain adds a link with the insurance modeling framework.
{A strategy} of the insurance consists of the retention level of a proportional reinsurance and the {amounts to be invested in the financial securities}.\\
Applying the classical stochastic control approach based on the {Hamilton-Jacobi-Bellman} (HJB) equation, we obtain an analytic construction of a forward dynamic exponential utility, see Theorem \ref{familyfeu},
and characterize the optimal reinsurance-investment strategy in Proposition \ref{prop:optimal}. Our analytical findings are qualitatively discussed via a numerical analysis in a two-state Markov chain model. In particular, we underline the dependence of the optimal strategy on the Markov chain under the assumption that insurance and reinsurance premia are calculated via the intensity-adjusted variance principle introduced by \citet{bec}. We also study the difference between the backward and the forward approach by comparing optimal strategies and optimal value functions, and see how the gap varies for different values of model parameters.

The paper is organized as follows. Section \ref{sec:comb_model} introduces the mathematical framework for the financial-insurance market model. In Section \ref{sec:optim} we formulate the optimization problem and  construct {a} forward dynamic exponential utility. Section \ref{sec:results} characterizes the optimal investment and reinsurance strategy. Numerical experiments and a sensitivity analysis are provided in Section \ref{sec:numerics} and Section \ref{sec:conclusion} concludes. Finally, some technical proofs are collected in Appendix \ref{sec:tech_res}. 	

\section{A regime switching insurance-financial market model}\label{sec:comb_model}

We fix a filtered probability space $(\Omega, \F, \P; \bF)$, where $\bF=\{\F_t,\ t \ge 0\}$ is a complete and right-continuous filtration.

We introduce a continuous time Markov chain $Y=\{Y_t,\ t \ge 0\}$ with finite state space
$\mathcal E=\{e_1, \dots, e_K\}$, where $e_j$, with $j=1,\ldots, K$, denote the standard vectors of $\R^K$. Let $Q=(q_{ij})_{i,j=1, \dots, K}$ be the $K\times K$ matrix representing the switching intensity. The entries of the matrix satisfy $q_{ij}\ge 0$ for all $i \neq j$ and $q_{ii}=-\sum_{i \neq j}q_{ij}$. We also recall that $Y$ admits the following semimartingale decomposition
\begin{align*}
Y_t=Y_0+\int_0^tQY_s \ud s +M^Y_t, \quad t \geq 0
\end{align*}
where $QY_s$ is the matrix-vector product and $M^Y=\{M^Y_t, \ t \ge 0\}$ is a martingale with respect to the natural filtration of $Y$. Due to the finite state nature of the Markov chain $Y$ we also get that, for any function $f:\mathcal E\to \R$, $f(Y_t)=\sum_{j=1}^Kf_j\I_{\{Y_t=e_j\}}$, where $f_j=f(e_j)$, for all $j=1, \dots, K$.
The process $Y$ is interpreted as a common stochastic factor that affects the loss process and the {risky asset} price  as shown below, and hence introduces a certain mutual dependence between the actuarial and the pure financial framework. Such behavior of the combined financial-insurance market is, nowadays, well known. Indeed, economic and geographical conditions, natural events or pandemics have a huge impact on certain lines of business of insurance companies, but they also affect returns of portfolios. In this paper we address this  modeling issue by assuming that all these exogenous events are aggregated to create different regimes.

{To describe the losses of the insurance company we introduce a Poisson process} $N=\{N_t,\ t \ge 0\}$, where $N_t$ counts the number of claims in $[0,t]$, with stochastic intensity given by the process $\{\lambda(t,Y_{t-}), \ t \ge 0\}$,
where the function  $\lambda: [0,+ \infty) \times \mathcal E \to (0,+\infty)$, is such that
\begin{equation}\label{lambdaint}
 \int_{0}^{+\infty} \lambda(t,e_j)\ud t  < \infty,
\end{equation}
for every $j=1, \dots, K$, and $\lambda(\cdot, e_j)$ is Borel-measurable. Notice that condition \eqref{lambdaint} implies in particular that
$$\mathbb{E} \bigg[ \int_{0}^{+\infty} \lambda(t,Y_{t})\ud t \bigg]\leq \max_{j=1, \dots, K}\int_{0}^{+\infty} \lambda(t,e_j)\ud t  < \infty.$$
Moreover, we observe that the intensity process $\{\lambda(t,Y_{t-}), \ t \ge 0\}$ is $\bF$-predictable.
\begin{remark}
Condition \eqref{lambdaint} implies that $N$ is non-explosive. Furthermore,
	the compensated process $\widetilde{N}=\{ \widetilde{N}_{t}, \ t \ge 0 \}$, given by
\begin{equation}
	\label{Nmg}
	\widetilde{N}_{t}= N_{t} - \int_0^t \lambda(s, Y_{s-}) \ud s, \ \ t \ge 0,
	\end{equation}
	is an $(\bF,\P)$-martingale (see \citet[Chapter II]{BREMAUD}).
\end{remark}
Let $\{T_n\}_{n \in \bN}$ be the sequence of jump times of $N$, or equivalently the {\em claims arrival times} and let $\{Z_{n}\}_{n \in \bN}$ be a sequence of independent and identically distributed $\mathcal{F}_0$-random variables independent of $N$ and $Y$. All random variables $\{Z_{n}\}_{n \in \bN}$ have common continuous distribution function with compact support $\mathcal Z\subset [0,+\infty)$, which is denoted by $F(z)$. For any $n \in \mathbb{N}$, $Z_n$ indicates {the} claim amount at time $T_n$.

\noindent The cumulative claims process $C=\{C_t,\ t \ge 0\}$ is given by
\begin{equation}
C_{t}=\sum_{n=1}^{N_{t}} Z_{n}, \ \ \ t \ge 0.
\end{equation}

\begin{remark}\label{nu}
We can provide an equivalent representation of the claim process $C$ in terms of its jump measure $m$, as follows. Define
\begin{align}
\label{rm}
m(\ud t,\ud z)& 
= \sum_{n \in \bN} \delta_{(T_{n},Z_{n})} (\ud t,\ud z),
\end{align}
where $\delta_{(t,z)}$ is the Dirac measure at point $(t,z)\in [0,+ \infty) \times \mathcal Z$; then, we get that for every $t \ge 0$
\begin{equation}
C_{t} = \int_{0}^{t} \int_{\mathcal Z} z m(\ud s,\ud z).
\end{equation}
The measure $m$ is a random counting measure with dual predictable projection $\nu$  given by
	\begin{equation}\label{nudef}
	\nu(\ud t, \ud z) = F(\ud z)\lambda(t,Y_{t-})\ud t.
	\end{equation}
Since the process $N$ is non-explosive and claim amounts have compact support, it holds that
 \begin{equation}\label{claimfiniti}
\esp{C_t}=\esp{\int_0^{t} \int_{\mathcal Z} z m (\ud s, \ud z) } = \esp{\int_0^{t} \int_{\mathcal Z} z \lambda(s,Y_{s-}) F(\ud z) \ud s} < \infty,
\end{equation}
for every $t \ge 0$. {We refer to  \citet[Chapter VIII, Section 1]{BREMAUD} for further details.}
\end{remark}

The insurance company collects premia from selling insurance contracts and buys  reinsurance contracts to share the risk that it may not be able to carry. Since claim arrival intensity is affected by the stochastic factor $Y$, following for instance \citet{bec}, we assume that both the claim premium rate and the reinsurance premium are subject to different regimes. Precisely,
the insurance gross premium is  of the form $a(t,Y_t)$, for every $t \ge 0$,  where $a:[0,+ \infty) \times \mathcal E \to [0,+\infty)$ is a continuous function in $t \geq 0$,  for all $j=1,\dots, K$.
We consider  reinsurance contracts of proportional type, with protection level $\theta=\{\theta_{t} ,\ t \ge 0\}$, i.e. at any time $t \ge 0$, $\theta_t\in [0,1]$ represents the percentage of losses which are covered by the reinsurer. The insurance company pays to the reinsurer a premium at rate $\{b(t,Y_{t},\theta_{t}), \ t \ge 0 \}$, for some function $b:[0,+ \infty) \times \mathcal E \times [0,1] \to [0,+\infty)$, which is jointly continuous with respect to $(t,\theta)$, for every $e_j\in \mathcal E$, with $j=1,\ldots,K$.

{This structure for} the insurance and the reinsurance premia includes classical premium calculation principles, as for instance, the expected value principle and the variance premium principle, as well as recently introduced calculation principles like the intensity-adjusted variance principle, proposed in \citet{bec}. The latter premium calculation rule has the advantage that optimal reinsurance strategies are chosen according to the regime.

Insurance and reinsurance premia are assumed to satisfy the conditions listed below (see, e.g. \citet{bec}), that {translate} the classical premium properties to the case where they depend on a Markov chain.
\begin{ass}\label{ipass}
The function $b{(t,e_j,\theta)}$ has continuous partial derivatives $\pd{b(t,e_j,\theta)}{\theta}$, $\pds{b(t,e_j,\theta)}{\theta}$ in $\theta \in [0,1]$ and satisfies \begin{itemize}
 \item[$(i)$] $b(t,e_j,0)=0$, for all $(t,e_j) \in [0,+ \infty) \times \mathcal E$, since the cedent does not need to pay for a null protection;
 \item[$(ii)$] $\pd{b(t,e_j,\theta)}{\theta}\ge0$, for all $(t,e_j,\theta) \in [0,+ \infty) \times \mathcal E \times [0,1]$, because the premium is increasing with respect to the protection level;
 \item[$(iii)$] $b(t,e_j,1)>a(t,e_j)$, for all $(t,e_j) \in [0,+ \infty) \times \mathcal E$, for preventing a profit without risk;
 \end{itemize}
In the sequel, $\pd{b(t,e_j,0)}{\theta}$ and $\pd{b(t,e_j,1)}{\theta}$ should be intended as right and left derivatives, respectively.
\end{ass}

It follows from the continuity of the functions $a(t, e_j)$ with respect to $t$ and of the function $b(t, e_j, \theta)$ with respect to $(t, \theta)$, for all $j=1,\dots, K$,  and the finite state nature of the Markov chain $Y$ that for every $t \geq 0$,
\begin{equation}\label{eq:iv} |a(t,Y_{t})-b(t,Y_{t},\theta_t)| \le k(t), \ \ \P{\rm -a.s.},  \end{equation}
for some continuous function $k:[0,+\infty)\to [0, +\infty)$, since $\theta_t\in [0,1]$. In particular $\int_0^tk(s) \ud s<\infty$, for all $t \ge 0$. Moreover, the following implications hold:
\begin{equation}\label{b_intcond}
\mathbb{E} \bigg[ \int_{0}^{t} b(s,Y_s,\theta_s) \ud s \bigg] \leq \max_{\theta\in [0,1], \atop j=1,\dots,K}\int_0^t b(s, e_j, \theta) \ud s < \infty,
\end{equation}
for every $t \ge 0$,  and
\begin{equation}
\label{a_intcond}
 \mathbb{E} \bigg[ \int_{0}^{t} a(s, Y_s) \ud s \bigg] \leq \max_{j=1,\dots,K} \int_{0}^{t} a(s, e_j) \ud s < \infty,
\end{equation}
for every $t \ge 0$.
In addition, we assume that
\begin{equation}\label{condconc}
	-\pds{b}{\theta}(t,e_i,\theta) < \gamma \lambda(t,e_i) \int_{{\mathcal Z} } e^{\gamma(1-\theta)z}z^2 F(\ud z) ,
\end{equation}
for every $(t,e_i,\theta) \in [0,+\infty) \times \mathcal E \times [0,1]$.
This condition guarantees the existence of a unique optimal reinsurance strategy
(see discussion after Proposition \ref{prop:optimal} below).


For any given reinsurance strategy $\theta$, the insurance company surplus (or reserve) process $R^\theta=\{R_t^\theta, \ t \ge 0\}$ is given by
\begin{equation}
 \label{surplus}
 \ud R_{t}^{\theta}=a(t,Y_{t})\ud t - b(t,Y_{t},\theta_{t}) \ud t - (1-\theta_{t-}) \ud C_{t}, \quad R_{0}^{\theta}=r_0>0.
\end{equation}
In particular, integrability conditions on insurance/reinsurance premiums imply that the surplus process is well defined. Notice that,  for every {$t\ge 0$},
\begin{align*}
|R_{t}^{\theta}|&=\left|\int_0^t(a(s,Y_s)\ud s - b(s,Y_s,\theta_s) )\ud s - \int_0^t (1-\theta_{s-}) \ud C_s\right|\\
&\leq \left|\int_0^t(a(s,Y_s)\ud s - b(s,Y_s,\theta_s) )\ud s \right|+ \left| \int_0^t (1-\theta_{s-}) \ud C_s\right|\\
&\leq \int_0^tk(s) \ud s + C_t, \quad \P-\text{a.s.},
\end{align*}
and hence $\esp{|R_{t}^{\theta}|}<\infty$, {for every $t \ge 0$, in view of \eqref{claimfiniti}.}

The insurance company is allowed to invest part of its premia in a financial market where investment possibilities are given by a riskless asset with value process $B = \{B_t,\ t \ge 0\}$ and a stock with price process $S=\{S_t, \ t \ge 0\}$. We assume zero interest rate, that is, $B_t=1$ for every $t \ge 0$, and that $S$ follows a regime-switching constant elasticity of variance (CEV) model, i.e.
\begin{equation}\label{s}
\ud S_{t}=S_t\left(\mu(Y_{t})\ud t + \sigma(Y_{t})S^\beta_{t} \ud W_{t}\right), \quad S_{0}=s >0,
\end{equation}
where $-1<\beta\leq 0$ is the coefficient of elasticity and $W=\{W_t,\ t \ge 0\}$ is a standard {$\bF$-}Brownian motion independent of $Y$ and the random measure $m(\ud t, \ud z)$.

{Specifically, we assume that the filtration $\bF=\{\F_t, t \ge 0\}$ is the completed and right-continuous filtration with
$$
\F_t=\F^W_t\vee\F^Y_t\vee\F^m_t\vee\mathcal{O},
$$
for every $t \geq 0$, where $\F^W_t=\sigma\{W_s, \ 0 \leq s \leq t\}$, $\F^Y_t=\sigma\{Y_s, \ 0 \leq s \leq t\}$, $\F^m_t=\sigma\{m([0,s)\times A), \ 0 \leq s \leq t,\ A \in \mathcal{B}(\mathcal Z)\}$ (see, e.g. \citet[Chapter VIII, Equation (1.3)]{BREMAUD}), with $\mathcal{B}(\mathcal Z)$ being the Borel $\sigma$-algebra on $\mathcal Z$, and $\mathcal O$ is the collection of $\P$-null sets.}

{The functions} $\mu: \mathcal E \to \R$ and $\sigma: \mathcal E \to [0,+\infty)$ {are measurable functions representing} the appreciation rate and the volatility of the stock, respectively. We also assume that the diffusion term is not degenerate, that is, $\sigma(e_j)> 0$ for every $j=1,\dots, K$. {Notice that functions $\mu$ and $\sigma$ may only take a finite number of values and therefore, they are bounded from above and below; in particular, it holds that $\underline{\mu}\leq \mu(Y_t)\leq \overline{\mu}$ and $0<\underline{\sigma}\leq \sigma(Y_t)\leq\overline{\sigma}$, for every $t \ge 0$, where $\underline{\mu}=\min_{j=1,\dots,K}\mu(e_j)$, $\overline{\mu}=\max_{j=1,\dots,K}\mu(e_j)$, $\underline{\sigma}=\min_{j=1,\dots,K}\sigma(e_j)$, $\overline{\sigma}=\max_{j=1,\dots,K}\sigma(e_j)$. Consequently, the ratio $\frac{\mu(Y_t)}{\sigma(Y_t)}$ is also bounded from above and below for every $t \ge 0$.}

\begin{remark}
The choice of a CEV model for the stock price process deserves a few considerations. This model was originally introduced in the paper \citet{cox1975valuation} under the assumption $\beta<0$ and extended later on to the case $\beta>0$, see e.g. \citet{emanuel1982futher}. The range of the parameter has different interpretations. For specific choices of $\beta\in \mathbb{R}$ the stock price dynamics reduces to well known processes. For instance, when $\beta=0$ and the coefficients $\mu$ and $\sigma$ are constant, we get the classical Black \& Scholes model, for $\beta=-\frac{1}{2}$ we end up with a Cox-Ingersoll-Ross process and when $\beta=-1$ the process $S$ becomes an Ornstein-{Uhlenbeck} process. It is therefore clear that, depending on the values of $\beta<0$, the process $S$ may touch zero with positive probability in finite time and even become negative, which is an unpleasant characteristic for modeling stock prices. On the other hand, if $\beta>0$ the process $S$ may explode.
Both choices for the range of $\beta$, either $\beta<0$ or $\beta>0$ have advantages and drawbacks. The setting with $-1<\beta\leq 0$ and constant coefficients $\mu$ and $\sigma$ is studied deeply for instance in \citet{delbaen2002note}, where the existence of an equivalent martingale measure is proved and considerations on absence of arbitrage are provided. For further details we also refer to, e.g.  \citet{dias2020note,heath2000martingales}.
\end{remark}

\section{{Forward exponential utility preferences}}\label{sec:optim}

We consider the problem of an insurance company with an initial wealth $x_0$, which invests its surplus in the financial market outlined in Section \ref{sec:comb_model} and buys a proportional reinsurance. For every $t \ge 0$, we denote by $\Pi_{t}$ the total amount of wealth invested in the risky asset at time $t$, and hence $X_t - \Pi_t$ is the capital invested in the riskless asset at time $t$. We assume that short-selling and borrowing from the bank account are allowed and accordingly we take $\Pi_t\in \R$ for every $t \ge 0$. {Moreover, for every $t \ge 0$, let $\theta_t\in [0,1]$ be the dynamic retention level at time $t$ corresponding to the reinsurance contract. We will consider only self-financing strategies.}
{Then, the wealth of the insurance company associated with the investment-reinsurance strategy $H=(\Pi,\theta)= \{(\Pi_{t}, \theta_{t}), \ t \ge 0\}$}
satisfies the following stochastic differential equation (in short SDE)
\begin{align}
\ud X_{t}^H &= \ud R_{t}^\theta + \Pi_{t} \frac{\ud S_{t}}{S_{t}} + (X_{t}^H-\Pi_{t})\frac{\ud B_{t}}{B_{t}} \\
& = \big{\{}  a(t,Y_{t}) - b(t,Y_{t},\theta_{t}) +\Pi_{t}\mu(Y_{t}) \big{\}}  \ud t + \Pi_{t}\sigma(Y_{t})S^\beta_t \ud W_{t} - (1-\theta_{t-})\ud C_{t},\label{wealth}
\end{align}
with $X_{0}^H=x_0 \ge 0$.
\begin{remark}
The solution of the SDE \eqref{wealth} is given by
\begin{align}
	X_t^H =& x_0 +\int_0^t  \left(a(s,Y_s) - b(s,Y_s,\theta_s)+ \Pi_s\mu(Y_s) \right)\ud s  	\\ &+ \int_0^t \Pi_s \sigma(Y_s) S_s^\beta \ud W_s - \int_0^t \int_{{\mathcal Z} }  (1-\theta_{s-}) z m(\ud s,\ud z), \quad t \ge 0.	\label{wealthsol}
\end{align}
\end{remark}

\begin{remark}
Since the insurance company can borrow a potentially unlimited amount from the bank account, the wealth process is allowed to be negative, and hence this permits us to neglect bankruptcy. 
From the mathematical point of view, dealing with a negative wealth is not a problem since we consider forward utilities of exponential type. In real life, large companies may easily have access to large amount of liquidity. Moreover, as observed also in \citet{schmidli2017risk} ``... The event of ruin almost never occurs in practice. If an insurance company
observes that their surplus is decreasing they will immediately increase their premia.
On the other hand an insurance company is built up on different portfolios. Ruin in one portfolio does not mean bankruptcy.''
The insurance company, in fact, can adjust premia dynamically. This is accounted, in our model, for instance by assuming that insurance (and reinsurance) premia are time-dependent and chosen according to the current regime.
\end{remark}

%



We assume that the preferences of the insurance company are exponential but they are specified forward in time and then, its goal is to maximize the expected forward utility, as explained below. As a first step, we provide the definition of dynamic performance process.

\begin{definition}\label{def:FDU}
Fix a normalization point $t_0\ge 0$. An $\bF$-adapted process $U(x, t_0)=\{U_t(x, t_0),\ t \ge t_0\}$ is a dynamic performance process (normalized at $t_0$) if
\begin{itemize}
\item[a)] the function $x \to U_t(x,t_0)$ is increasing and concave for all $t \ge t_0$;
\item[b)] for every self-financing strategy $H$, and for all $t, T$ such that $ t_0\leq t\leq T$ it holds that
\[
U_ {t}(X_t^H,t_0)\ge \esp{U_{T}(X_{T}^{H},t_0) | \F_{t}};
\]
\item[c)] there exists a self-financing strategy $H^*$ such that,  for all $t, T$ such that $ t_0\leq t\leq T$, it holds that
\[
U_ {t}(X_t^{H^*},t_0)= \esp{U_{T}(X_{T}^{H^*},t_0) | \F_{t}};
\]
\item[d)] at $t=t_0$,
\[
U_{t_0}(x, t_0)=u_0(x),
\]
where $u_0(x)$ is a concave and increasing function of wealth.
\end{itemize}
\end{definition}

From now on the time point $t_0$ will be our starting point and all processes and filtrations will be considered for $t \geq t_0$.

We will work under exponential preferences, that is, $u_0(x)=-e^{-\gamma x}$, with $\gamma>0$. Then, in this case Definition \ref{def:FDU} describes a forward dynamic exponential utility and can be re-formulated as follows.
\begin{definition}\label{FDU}
Let $t_0\ge0$. An $\bF$-adapted stochastic process $U=\{U_{t}(x,t_0) :\ t \ge t_0\}$ is a {\em forward dynamic exponential utility (FDU)}, normalized at $t_0$, if for all $t, T$ such that $t_0 \le t \le T$, it satisfies the stochastic optimization criterion
\begin{equation}\label{FDUsm}
		U_ {t}(x,t_0) = \left \{\begin{array}{ll}
		-e^{-\gamma x}, & t=t_0,  \\
		\sup_{{H \in}\A} \esp{U_{T}(X_{T}^{H},t_0) | \F_{t}}, &  t_0 < t \leq T,
		\end{array} \right.
\end{equation}
with $X^H$ given by \eqref{wealth}, $X^H_{t_0}=x \in \R$ and $\gamma>0$, {for a suitable class $\mathcal A$ of admissible strategies which is characterized later}.
\end{definition}

The rationale behind this definition is that at a certain time $t_0$ (for instance, $t_0=0$), the insurance company specifies its utility which is based on the available information. As {time} goes by, market conditions may change and hence the insurance company might be willing to modify its preferences accordingly. The advantages of this approach are double: {first, there is no need to specify {\em a priori} a utility to be valid at the maturity, i.e. the investor does not fix today investment preferences that will hold at a future date; second,}
defining the function $v$ as
\begin{equation}
	v(t,x,s,e_i;t_0)=\sup_{H \in \A} \mathbb{E}_{t,x,s,e_i} \Big{[} U_T(X_T^H,t_0)\Big{]},
\end{equation}
where $ \mathbb{E}_{t,x,s,e_i}[\cdot] $ denotes the conditional expectation given $X_{t}^H=x$, $S_t=s$ and $Y_{t}=e_i$, for every $(t,x,s,e_i) \in [t_0,T] \times \R \times (0,+\infty) \times \mathcal E $ and every $T \ge t_0${, it holds that}
\begin{equation}\label{sg}
{	U_t(x,t_0)=v(t,x,s,e_i;t_0).}
\end{equation}
{The latter implies that the forward dynamic exponential utility coincides with value function of the optimization problem at any time.}
An important property of the forward approach is that a forward dynamic utility might not be unique, as argued, for instance in \citet{musiela2008optimal}.

Let $t_0 \ge 0$ be the normalization point. We define the function $g: [t_0,+ \infty) \times (0,+\infty) \times \mathcal E\to \R$ as
\begin{equation}\label{g}
	g(t,s, e_i)= \frac{1}{2}\bigg( \frac{\mu(e_i)}{\sigma(e_i) s^\beta} \bigg)^2 + \gamma a(t,e_i) -\varphi(t,e_i),
\end{equation} where the function $\varphi:[t_0,+\infty) \times \mathcal E\to \R$ is given by
\begin{equation}\label{eq:phi}
	\varphi(t,e_i) = \gamma b(t,e_i,\overline{\theta}_{t}) + \lambda(t,e_i) \int_{\mathcal Z}  \Big(e^{\gamma (1-\overline{\theta}_{t-})z} -1 \Big) F(\ud z),
\end{equation}
and {$\overline{\theta}_t=\overline{\theta}(t, Y_t)$ satisfies:}
\begin{equation}\label{optr}
\overline{\theta}(t,e_i) = \left \{\begin{array}{lll}
0, \qquad \quad \ (t,e_i) \in \D_0 \\
\hat{\theta}(t,e_i), \quad (t,e_i) \in (\D_0\cup \D_1)^c \\
1, \qquad \quad \ (t,e_i) \in \D_1,
\end{array} \right.
\end{equation}
where $(\D_0\cup \D_1)^c$ indicates the complementary set of $\D_0\cup \D_1$, \begin{equation}\begin{split}
& \D_0 \equiv \bigg{\{} (t,e_i) \in [t_0,+\infty) \times \mathcal{E} \ \big{|} \ \lambda(t,e_i) \esp{Z_1 e^{\gamma Z_1}} \le \pd{b}{\theta}(t,e_i,0)   \bigg{\}} \\
& \D_1 \equiv \bigg{\{} (t,e_i) \in [t_0,+\infty) \times \mathcal{E} \ \big{|} \ \pd{b}{\theta}(t,e_i,1) \le \lambda(t,e_i)\esp{Z_1} \bigg{\}}
\end{split}
\end{equation}
{and $\hat{\theta}$ is the unique solution of the equation: }
\begin{equation}\label{solr}
 \pd{b}{\theta}(t,e_i,\theta)  = \lambda(t,e_i) \int_{{\mathcal Z} } z e^{\gamma (1-\theta)z} F(\ud z) .
\end{equation}

{The existence of a unique solution to the equation \eqref{solr} is guaranteed by the concavity assumption in equation \eqref{condconc}, see also the comment after Proposition \ref{prop:optimal}.
The process $\overline{\theta}=\{\overline{\theta}_t,\ t \ge t_0\}$ defined above takes values in $[0,1]$ and has the same structure of the optimal retention level in the standard backward utility maximization. We will see later that it also provide the optimal reinsurance strategy under forward utility preferences.}

Next, for $t\ge t_0$, we define the process $\{h(t_0,t),\ t \ge t_0\}$ as
\begin{equation}\label{h}
 h(t_0,t)= \int_{t_0}^{t} g(v,S_v,Y_v) \ud v,
\end{equation}
with $g$ given in \eqref{g}.
{
Our objective is to characterize the forward dynamic exponential utility (\textbf{Problem 1}), i.e.
$$
U_t(x,t_0)=\sup_{H\in \mathcal A}  \esp{-e^{-\gamma X_{T}^H + h(t_0,T)} \Big{|} \F_t},
$$
}
for every $t_0\le t\leq T$, and to find the optimal strategy $H \in \mathcal A$, where the set  of admissible strategies $\mathcal A$ is defined below.
\begin{definition}\label{def:admissible_strategies}
An {\em admissible} strategy is {a pair $H=(\Pi,\theta)= \{(\Pi_{t}, \theta_{t}), \ t \ge t_0\}$ of $\bF$-progressively measurable processes with values in $\R \times [0,1]$, such that, for every $T\ge t_0$ $\esp{e^{-\gamma X_T^H+h(t_0,T)}}< \infty$} and
\begin{equation} \label{int_ammiss}
\esp{\int_{t_0}^t\left(|\Pi_s| +\Pi_s^2 S_s^{2\beta}\right) \ud s} < \infty.
\end{equation}
\end{definition}
\noindent {Denote by $C_b^{1,2,2}$ the set of all bounded functions $f(t,x,s,e_j)$, with bounded first-order derivatives with respect to $t,x,s$ and bounded second-order derivatives with respect to $x,s$, for every $j =1,\ldots,K$.
Let $\L^H$ denote the Markov generator of $(X^H,S,Y)$ associated with a constant control $H=(\theta,\Pi) \in [0,1] \times \R$.}
\begin{lemma}
\label{xy}
Let $f(\cdot,\cdot,\cdot,e_i) \in C_b^{1,2,2}$, for each $ e_i \in \mathcal E$. For any constant strategy $H=(\Pi, \theta)\in \mathbb{R}\times [0,1]$, the triplet $(X^H,S,Y)$ is a Markov process 
with infinitesimal generator $\L^H$ given by
\begin{align}
&\L^H f(t,x,s,e_i)=  \pd{f}{t}(t,x,s,e_i) + \big[ a(t,e_i) - b(t,e_i,\theta) + \Pi\mu(e_i)\big]\pd{f}{x}(t,x,s,e_i) \\&\quad + \sum_{j =1}^K f(t,x,s,e_j)q_{ij}   + s \mu(e_i)\pd{f}{s}(t,x,s,e_i)+\frac{1}{2}\Pi^{2} \sigma^2(e_i) s^{2\beta} \pds{f}{x}(t,x,s,e_i)\\
&\quad +s^{2\beta+2}\sigma(e_i)\pds{f}{s}(t,x,s,e_i) + \Pi \sigma^2(e_i) s^{2\beta+1} \frac{\partial^2 f}{\partial x\partial s}(t,x,s,e_i)\\
&\quad + \lambda(t,e_i) \int_{\mathcal Z} \Big{\{}f\big(t,x-(1-\theta)z,s,e_i\big)-f(t,x,s,e_i)\Big{\}} F(\ud z). \label{mgenxy}
\end{align}
 \end{lemma}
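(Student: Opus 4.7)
The plan is to fix a constant control $H=(\Pi,\theta)\in\R\times[0,1]$, establish the Markov property of the triplet $(X^H,S,Y)$ from the structure of its joint dynamics, and then identify $\L^H$ by applying It\^o's formula to $f(t,X_t^H,S_t,Y_t)$ and reading off the finite-variation part.

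For the Markov property, I would observe that, with constant $H$, the joint dynamics
\begin{align*}
\ud X_t^H&=[a(t,Y_t)-b(t,Y_t,\theta)+\Pi\mu(Y_t)]\ud t+\Pi\sigma(Y_t)S_t^\beta\ud W_t-(1-\theta)\int_{\mathcal Z}z\,m(\ud t,\ud z),\\
\ud S_t&=S_t\mu(Y_t)\ud t+S_t^{\beta+1}\sigma(Y_t)\ud W_t,\\
\ud Y_t&=QY_t\ud t+\ud M_t^Y,
\end{align*}
have coefficients depending only on the current state of the triplet (and on $t$), while the driving noises $W$, $m$ and $Y$ are independent with specified law (Brownian motion; integer-valued random measure with compensator $\nu$ as in \eqref{nudef}; finite-state Markov chain with generator $Q$). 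Uniqueness in law of this SDE system then yields the Markov property of $(X^H,S,Y)$ in a standard way, cf.\ \citet[Chapter VIII, Section~1]{BREMAUD}.

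Next, I would apply It\^o's formula for semimartingales with jumps to $f(t,X_t^H,S_t,Y_t)$, with $f(\cdot,\cdot,\cdot,e_i)\in C_b^{1,2,2}$. The drifts of $X^H$ and $S$ produce the two first-order terms in \eqref{mgenxy}, while the continuous-martingale brackets
\begin{equation*}
\ud\langle X^{H,c}\rangle_t=\Pi^2\sigma^2(Y_t)S_t^{2\beta}\ud t,\quad\ud\langle S\rangle_t=\sigma^2(Y_t)S_t^{2\beta+2}\ud t,\quad\ud\langle X^{H,c},S\rangle_t=\Pi\sigma^2(Y_t)S_t^{2\beta+1}\ud t,
\end{equation*}
give the three second-order terms. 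The jumps split cleanly into those coming from $m$ (where $X^H$ jumps by $-(1-\theta)z$ while $S$ and $Y$ do not), which are compensated by $\lambda(t,Y_{t-})\int_{\mathcal Z}[f(t,X_{t-}^H-(1-\theta)z,S_t,Y_{t-})-f(t,X_{t-}^H,S_t,Y_{t-})]F(\ud z)\ud t$ in view of \eqref{nudef}, and those coming from $Y$, whose compensator equals $\sum_{j=1}^K q_{ij}f(t,X_t^H,S_t,e_j)\ud t$ on $\{Y_{t-}=e_i\}$, using the semimartingale decomposition of $Y$ and $q_{ii}=-\sum_{j\ne i}q_{ij}$. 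Collecting the drift terms yields exactly the expression \eqref{mgenxy} for $\L^H f$.

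Finally, the boundedness of $f$ and of its derivatives, together with \eqref{int_ammiss}, \eqref{claimfiniti} and the boundedness of $\mu$ and $\sigma$ recorded after \eqref{s}, guarantee that the residual stochastic integrals against $\ud W_t$, $\ud M_t^Y$ and the compensated measure $m-\nu$ are true martingales. Hence $f(t,X_t^H,S_t,Y_t)-\int_{t_0}^t\L^H f(u,X_u^H,S_u,Y_u)\ud u$ is a martingale, which together with the Markov property established above identifies $\L^H$ as the infinitesimal generator of $(X^H,S,Y)$. I expect the only delicate point to be the simultaneous bookkeeping of three structurally different noise sources in It\^o's formula; once the compensators are correctly lined up, everything else is a term-collecting exercise.
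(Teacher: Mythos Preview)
Your proposal is correct and follows essentially the same route as the paper: both arguments first make the jump structure of $Y$ explicit (the paper via its jump measure $m^Y$ and compensator $\nu^Y$, you via the semimartingale decomposition of $Y$), then apply It\^o's formula to $f(t,X_t^H,S_t,Y_t)$ and read off the drift as $\L^H f$, leaving the $\ud W$, compensated-$m$ and compensated-$m^Y$ integrals as the martingale part. Your write-up is a bit more explicit about the Markov property and the verification that the local martingales are true martingales, which the paper leaves implicit, but the core computation is identical.
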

\noindent The proof of this result is given in Appendix \ref{app:xy}.

\noindent Now, we provide the analytic construction of a forward dynamic utility in this framework.\
\begin{theorem}
	\label{familyfeu}
	Let $t_0 \ge 0$ be the forward normalization point. Then, the process $\{U_t(x,t_0),\ t \ge t_0 \}$, given for $x \in \R$ and $t \ge t_0$, by
	\begin{equation} \label{FDUe}
	U_t(x,t_0)=-e^{-\gamma x + h(t_0,t)},
	\end{equation} with the process $\{h(t_0,t),\ t \ge t_0\}$ defined in \eqref{h}, is a forward dynamic exponential utility, normalized at $t_0$.
\end{theorem}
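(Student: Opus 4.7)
The initial condition $U_{t_0}(x,t_0)=-e^{-\gamma x}$ holds trivially since $h(t_0,t_0)=0$, so the substance of the argument is to verify the stochastic optimization criterion \eqref{FDUsm} for $t_0<t\le T$. The plan is to apply the It\^o formula with jumps to $U_t(X_t^H,t_0)=-\exp(Z_t^H)$, with $Z_t^H:=-\gamma X_t^H+h(t_0,t)$, along an arbitrary admissible strategy $H=(\Pi,\theta)$. Using the dynamics \eqref{wealth} and the representation of the claim process via the random measure $m$ (Remark \ref{nu}), one obtains a semimartingale decomposition
\begin{align*}
U_t(X_t^H,t_0)=U_{t_0}(x,t_0)+\int_{t_0}^t U_{s-}(X_{s-}^H,t_0)\,\Psi(s,S_s,Y_s;\Pi_s,\theta_s)\,\ud s+N_t^H,
\end{align*}
where $N^H$ is a local martingale gathering the $W$-stochastic integral and the integral against the compensated measure $m-\nu$, and
\begin{align*}
\Psi(t,s,e_i;\Pi,\theta)&=g(t,s,e_i)-\gamma a(t,e_i)+\gamma b(t,e_i,\theta)-\gamma\Pi\mu(e_i)+\tfrac{\gamma^2}{2}\Pi^2\sigma^2(e_i)s^{2\beta}\\
&\quad+\lambda(t,e_i)\int_{\mathcal Z}\bigl(e^{\gamma(1-\theta)z}-1\bigr)F(\ud z).
\end{align*}

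The key algebraic step is to check that, with $g$ and $\varphi$ chosen as in \eqref{g}--\eqref{eq:phi}, the function $\Psi$ satisfies $\Psi\ge 0$ on $\R\times[0,1]$ with equality only at the pair $(\Pi^*(t,s,e_i),\overline\theta(t,e_i))$ where $\Pi^*=\mu(e_i)/[\gamma\sigma^2(e_i)s^{2\beta}]$ and $\overline\theta$ is given by \eqref{optr}. The minimisation in $\Pi\in\R$ is a one-dimensional quadratic problem with unique minimiser $\Pi^*$, contributing the term $-\tfrac12(\mu(e_i)/(\sigma(e_i)s^{\beta}))^2$. The minimisation in $\theta\in[0,1]$ of the strictly convex map $\theta\mapsto \gamma b(t,e_i,\theta)+\lambda(t,e_i)\int_{\mathcal Z}(e^{\gamma(1-\theta)z}-1)F(\ud z)$---strict convexity being precisely assumption \eqref{condconc}---produces the interior first-order condition \eqref{solr}, while boundary analysis delivers the sets $\D_0$ and $\D_1$. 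Plugging the minimiser back into $\Psi$ and collecting terms via \eqref{eq:phi} gives $\Psi(t,s,e_i;\Pi^*,\overline\theta)\equiv 0$ by direct cancellation, so the definition of $g$ is exactly what is needed to make the Hamiltonian vanish at the optimiser.

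Since $U_{s-}<0$ while $\Psi\ge 0$, the finite-variation part of the decomposition is non-increasing for every admissible $H$. A localisation along a reducing sequence for $N^H$, together with Fatou's lemma and the admissibility bound $\mathbb{E}[e^{-\gamma X_T^H+h(t_0,T)}]<\infty$ (which controls the negative tail of $U_T(X_T^H,t_0)$), then yields the supermartingale inequality $U_t(X_t^H,t_0)\ge\mathbb{E}[U_T(X_T^H,t_0)\mid\F_t]$ demanded in Definition \ref{def:FDU}(b). For $H^*=(\Pi^*,\overline\theta)$ the drift vanishes identically, so $U_t(X_t^{H^*},t_0)$ is a local martingale, which the same integrability bound upgrades to a true martingale by dominated convergence on the localised sequence, giving property (c). The main technical obstacle is the admissibility of $H^*$ itself: since $\Pi^*_t\propto S_t^{-2\beta}$ and the CEV dynamics can drive $S$ close to $0$ when $\beta<0$, establishing the square-integrability \eqref{int_ammiss} and the exponential moment $\mathbb{E}[e^{-\gamma X_T^{H^*}+h(t_0,T)}]<\infty$ requires careful control of negative moments of the CEV process in the range $-1<\beta\le 0$, together with the boundedness of $\mu/\sigma$ noted in the model setup.
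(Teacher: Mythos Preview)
Your proposal is correct and rests on the same underlying computation as the paper, but organises it differently. The paper reformulates the optimisation criterion via an auxiliary value function $u(t,x,s,e_i)=\sup_{H}\mathbb E_{t,x,s,e_i}[-e^{-\gamma X_T^H+h(t,T)}]$, derives the HJB equation $\sup_H\mathcal L^H u+gu=0$ with terminal datum $-e^{-\gamma x}$, and then invokes a Verification Theorem (Theorem~\ref{thver}) to identify $u\equiv -e^{-\gamma x}$ as the unique classical solution; the forward property \eqref{FDUsm} is then read off from $u$. Your route bypasses the auxiliary HJB and works directly on the process $U_t(X_t^H,t_0)$: the It\^o expansion you write down is exactly the content of the generator computation in the paper, and your Hamiltonian $\Psi$ is (up to the factor $U_{t-}$) the integrand appearing inside $\sup_H\mathcal L^H u+gu$. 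The cancellation that makes $\min_{\Pi,\theta}\Psi=0$ is precisely the definition of $g$ in \eqref{g}--\eqref{eq:phi}, and the paper's choice of ansatz $u(x)=-e^{-\gamma x}$ corresponds to your observation that $U_t(X_t^{H^*},t_0)$ has zero drift. One point worth noting: because $h(t_0,\cdot)$ is an ordinary integral, jumps of $Y$ do not produce jumps in $Z^H$, so the $\sum_j q_{ij}$ term present in the paper's $\mathcal L^H$ is absent from your $\Psi$; this is consistent, since in the paper that term vanishes anyway once the ansatz is independent of $e_i$. Regarding admissibility of $H^*$, the paper defers this to Proposition~\ref{prop:optimal} and handles the exponential moment via an equivalent measure change (Lemma~\ref{lemma:L}); the square-integrability reduces, as you note, to $\mathbb E\int S_r^{-2\beta}\,\ud r<\infty$, which the paper simply asserts for $-1<\beta\le 0$.
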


\begin{proof}
We show that the process $\{U_t(x,t_0),\ t \ge t_0 \}$ introduced in \eqref{FDUe} satisfies Definition \ref{FDU} (equivalently, Definition \ref{def:FDU} with the initial condition $u_0(x)=-e^{-\gamma x}$). Firstly, we see that $U_t(x,t_0)$ is $\F_t$-measurable for each $t \ge t_0$ and normalized at $t_0$, indeed the condition at $t=t_0$ is satisfied (i.e. $U_{t_0}(x,t_0)=-e^{-\gamma x}$).
Next, we need to prove that for arbitrary $t, T$ such that $t_0 \le t \le T$,
\begin{equation}\label{eq:1}
-e^{-\gamma x + h(t_0,t)} =\sup_{H\in \mathcal A}  \esp{-e^{-\gamma X_{T}^H + h(t_0,T)} \Big{|} \F_t}.
\end{equation}
This means that for any self-financing strategy $H$ we get
\[-e^{-\gamma x + h(t_0,t)} \ge  \esp{-e^{-\gamma X_{T}^H + h(t_0,T)} \Big{|} \F_t},
\]
and { we will also show that there is a self financing strategy $H^*\in \mathcal A$ such that equality holds.}

We notice that {equation \eqref{eq:1}} is equivalent to say that
\begin{equation}\label{uproof}
-e^{-\gamma x} = \sup_{H \in \A} \esp{-e^{-\gamma X_{T}^H + h(t,T)} \Big{|} \F_t}.
\end{equation}
{We define the right-hand side of equation \eqref{uproof} as}
\begin{equation}\label{vfass}
	u(t,x,s,e_i) = \sup_{H \in \A} \mathbb{E}_{t,x,s,e_i} \Big[ -e^{-\gamma X_{T}^H + h(t,T)} \Big]= {\sup_{H \in \A} \mathbb{E}_{t,x,s,e_i} \Big[ -e^{-\gamma X_{T}^H + \int_t^T g(r,S_r,Y_r) \ud r} \Big],}
\end{equation}
for a function $u:[0,+\infty) \times \R \times (0,+\infty) \times \mathcal E \to (-\infty,0)$.
We proceed as follows.

\textbf{Step 1.}
{We first notice that
$$
u(t,x,s,e_i)= e^{-\int_{t_0}^t g(r,S_r,Y_r) \ud r} \sup_{H \in \mathcal A} \mathbb{E}_{t,x,s,e_i} \Big[ -e^{-\gamma X_{T}^H + \int_{t_0}^T g(r,S_r,Y_r) \ud r} \Big].
$$
 Using the martingale property of the conditional expectation, if} $u$ is sufficiently smooth (i.e. $u \in \mathcal {C}^{1,2,2}_b$), { by It\^o formula and the product rule we get that} $u$ solves the final value problem
\begin{align}
\sup_{H \in \A} \L^H u(t,x,s,e_i) + g(t,s,e_i)u(t,x,s,e_i)= 0,\label{HJBu}
\end{align}
for all $(t,x,s,e_i) \in [t_0,T) \times \R \times (0,+\infty) \times \mathcal E$, with the final condition
\begin{align}\label{HJBu1}
u(T,x,s,e_i)  =-e^{-\gamma x}, \quad & (x,s,e_i) \in \R \times (t_0,+\infty) \times \mathcal E,
\end{align}
where we recall that $\L^H$ denotes the infinitesimal generator of the Markov process $(X^H,S,Y)$ defined in \eqref{mgenxy} associated with a constant control $H$.

\textbf{Step 2.} Next, we choose $H^*=(\Pi^*, \theta^*)$ such that $\Pi^*_t=\ds \frac{\mu(Y_t)}{\gamma \sigma^2(Y_t) S_t^{2\beta}}$ and {$\theta^*_t=\overline{\theta}(t, Y_t)$ as in equation \eqref{optr}}.
We show that the function $u$ of the form
\begin{equation}\label{ansatz}
u(t,x,s,e_i)=u(x)=-e^{-\gamma x}, \quad x \in \R,
\end{equation}
is the unique solution of the problem \eqref{HJBu}--\eqref{HJBu1}. First, we note that the function $u(x)=-e^{-\gamma x}$, with $x \in \R$, solves \eqref{HJBu}--\eqref{HJBu1}.
To get uniqueness, we apply the Verification Theorem (see Theorem \ref{thver} in Appendix \ref{sec:tech_res}).  We notice that conditions ($ii$) and ($iii$) of Theorem \ref{thver} are trivially satisfied, and hence we just need to show that  
for every $t,T$ such that $t_0\le t \le T$, we have
	\begin{equation*}
	\begin{split}
	&\mathbb{E} \bigg[ \int_{t}^{T\wedge \tau_n} \Big( e^{\int_t^r g(l,S_l,Y_l) \ud l} \sigma(Y_{r})S_r^\beta\Pi_{r} \pd{u}{x}(r,X_{r}^H,S_r,Y_r) \Big)^{2} \ud r \bigg] < \infty,	
	\\  &\mathbb{E}\Big[\int_{_{\mathcal Z} } \! \int_{t}^{T\wedge \tau_n} \! e^{\int_t^r g(l,S_l,Y_l) \ud l} \Big{|} u\big{(}r,X_{r-}^H-(1-\theta_{r-})z,S_r,Y_r) \big{)} - u(r,X_{r-}^H,S_r,Y_r) \Big{|} \\
& \quad \times \lambda(r,Y_{r-}) F(\ud z) \ud r \Big] < \infty,
	\end{split}\end{equation*}
	for a suitable, non-decreasing sequence of random times $\{ \tau_n \}_{n \in \bN}$ such that $\lim_{n\to +\infty} \tau_n= +\infty$. We define the sequence $\{ \tau_n \}_{n \in \bN}$ by setting
\begin{equation}
	\tau_{n}:= \inf \Big{\{} t \ge t_0 : \ e^{\int_{t_0}^t |g(r,S_r,Y_r)| \ud r} >n \vee \ X_t^H<-n  \Big{\}}, \quad n\in \bN.
	\end{equation}
{Observe that, over the stochastic interval $\llbracket t_0, T\wedge \tau_n\rrbracket$ there is an index $\bar n \leq n$ such that
$e^{\int_{t_0}^t |g(r,S_r,Y_r)| \ud r} \leq \bar n$ and  $X_t^H<-\bar{n}$, for all $t \in \llbracket t_0, T\wedge \tau_n\rrbracket$.
We denote by $C_{\bar n}$ a constant depending on $\bar n$, which may be different from one line to another. Then,} we get that
	\begin{equation*}
	\begin{split}
	&\esp{ \int_t^{T\wedge \tau_n} \Big( e^{\int_t^r g(l,S_l,Y_l) \ud l}\Pi_{r}\sigma(Y_{r})S_r^\beta \pd{u}{x}(r,X_{r}^H,S_r,Y_r) \Big)^{2} \ud r } \\
	&  = \esp{ \int_t^{T\wedge \tau_n}  e^{2\int_t^r g(l,S_l,Y_l) \ud l}\Pi_{r}^{2}\sigma^2(Y_{r})S_r^{2\beta} \Big(\gamma e^{-\gamma X_{r}^H}  \Big)^{2} \ud r }
	 \le C_{\bar n} \esp{\int_t^{T} \Pi_{r} ^{2} S_r^{2\beta}\ud r } < \infty,
	\end{split}
	\end{equation*}
 since $\Pi$ is an admissible strategy. Moreover, we have that
\begin{equation}
	\begin{split}
	& \mathbb{E}\Big[\int_{t}^{T\wedge \tau_n} \int_{_{\mathcal Z} } e^{\int_t^r g(l,S_l,Y_l) \ud l} \Big{|} u(r,X_{r-}^H-(1-\theta_{r-})z,S_r,Y_r)  - u(r,X_{r-}^H,S_r,Y_r) \Big{|} \\
&\quad \times \lambda(r,Y_{r-}) F(\ud z) \ud r \Big] \\
	& \quad =  \esp{ \int_{t}^{T\wedge \tau_n} \int_{_{\mathcal Z}} e^{\int_t^r g(l,S_l,Y_l) \ud l} e^{-\gamma X_{r-}^H}\Big{|} e^{\gamma (1-\theta_{r-})z} -1\Big{|} \lambda(r,Y_{r-}) F(\ud z)\ud r} 
	\\  & \quad
	\le C_{\bar n} \esp{\int_{t}^{T} \int_{_{\mathcal Z} } e^{\gamma z} \lambda(r,Y_{r-}) F(\ud z) \ud r} \\ & \quad \le C_{\bar n}   \esp{\int_{t}^{T} \lambda(r,Y_{r-}) \ud r} \int_{_{\mathcal Z} } e^{\gamma z} F(\ud z) < \infty,
	\end{split}\end{equation}
	 since $\mathcal{Z} \subset [0,+\infty)$ is compact and the integrability condition \eqref{lambdaint} holds.
Therefore, thanks to Theorem \ref{thver}, the function $u(x)=-e^{-\gamma x}$ is the unique solution of the boundary problem \eqref{HJBu}--\eqref{HJBu1}.

\textbf{Step 3.} {The steps above prove that the value function $u(t,x,s,e_i)$ is given by} 
$u(x)=-e^{-\gamma x}$. {Hence, using the equality \eqref{vfass}, we get that equations \eqref{uproof} and \eqref{FDUe}  hold. Consequently, according to Definition \ref{FDU}, $U=\{U_{t}(x,t_0)= -e^{-\gamma x + h(t_0,t)},\ t \ge t_0\}$ is a forward dynamic exponential utility.}
\end{proof}

\section{Investment and reinsurance under forward dynamic exponential utilities}\label{sec:results}
In this section we characterize the optimal investment strategy and the optimal reinsurance level for the forward exponential utility in \eqref{FDUe}.

\begin{proposition}\label{prop:optimal}
Let $t_0 \ge 0$ be the forward normalization point. The optimal investment portfolio $\Pi^*_t=\Pi^*(t,S_t, Y_t)$ is given by
\begin{equation}\label{opti}
 \Pi^{*}(t,s,e_i)= \frac{\mu(e_i)}{\gamma \sigma^2(e_i) s^{2\beta}},
\end{equation}
for every $(t,s,e_i) \in [t_0,+\infty) \times (0,+\infty) \times \mathcal E$.
Assume that condition \eqref{condconc} holds for every $(t,e_i,\theta) \in [t_0,+\infty) \times \mathcal E \times [0,1]$. {Then, the process $\theta^*=\{\theta^*_t, t \ge t_0\}$, where $\theta_t^*=\overline{\theta}(t, Y_t)$ and $\overline{\theta}(t,e_j)$ is given in equation \eqref{optr}, is the optimal reinsurance level.}
\end{proposition}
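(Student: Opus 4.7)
The plan is to exploit the explicit form of the forward utility obtained in Theorem~\ref{familyfeu}, namely $U_t(x,t_0) = -e^{-\gamma x + h(t_0,t)}$, and to identify the optimal controls as the pointwise maximizers in the associated HJB equation. Since $U$ satisfies the martingale property along the candidate strategy $H^* = (\Pi^*, \theta^*)$ (as used in Step 2 of the proof of Theorem~\ref{familyfeu}), it suffices to verify that the $H^*$ stated in the Proposition is exactly the maximizer appearing in the HJB relation \eqref{HJBu}, and that it is admissible in the sense of Definition~\ref{def:admissible_strategies}.

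Concretely, I would plug the explicit ansatz $u(t,x,s,e_i) = -e^{-\gamma x}$ into the supremum in \eqref{HJBu}. All derivatives of $u$ with respect to $t$, $s$ and mixed $xs$ vanish, and the Markov-chain term $\sum_j u(t,x,s,e_j) q_{ij}$ also vanishes because the rows of $Q$ sum to zero and $u$ does not depend on $e_i$. After dividing by the positive quantity $-u = e^{-\gamma x}$, equation \eqref{HJBu} reduces (up to the additive $g(t,s,e_i)$) to
\begin{equation*}
\sup_{(\Pi,\theta)\in\R\times[0,1]}\Bigl\{\,\gamma\Pi\mu(e_i) - \tfrac{1}{2}\gamma^2 \Pi^2\sigma^2(e_i)s^{2\beta} - \gamma b(t,e_i,\theta) - \lambda(t,e_i)\!\int_{\mathcal Z}\!\bigl(e^{\gamma(1-\theta)z}-1\bigr)F(\mathrm{d}z)\Bigr\} + \gamma a(t,e_i) = g(t,s,e_i).
\end{equation*}
The supremum splits additively into a term depending only on $\Pi$ and a term depending only on $\theta$, so the two optimizations can be carried out separately.

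The $\Pi$-optimization is strictly concave quadratic: the first-order condition $\gamma\mu(e_i) - \gamma^2\Pi\,\sigma^2(e_i)s^{2\beta} = 0$ yields the unique maximizer $\Pi^*(t,s,e_i) = \mu(e_i)/(\gamma\sigma^2(e_i)s^{2\beta})$ of \eqref{opti}, and the optimal value is $\tfrac12(\mu(e_i)/(\sigma(e_i)s^\beta))^2$. Differentiating the $\theta$-part once gives
\[
\tfrac{\partial}{\partial\theta}\Bigl\{\,{-\gamma b(t,e_i,\theta) - \lambda(t,e_i)\!\int_{\mathcal Z}(e^{\gamma(1-\theta)z}-1)F(\mathrm{d}z)}\Bigr\} = \gamma\Bigl[\lambda(t,e_i)\!\int_{\mathcal Z}\! z e^{\gamma(1-\theta)z}F(\mathrm{d}z) - \tfrac{\partial b}{\partial\theta}(t,e_i,\theta)\Bigr],
\]
and differentiating once more shows, using precisely \eqref{condconc}, that the bracket is strictly decreasing in $\theta$. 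Hence the function is strictly concave in $\theta$ on $[0,1]$; the sign of its derivative at $\theta=0$ and $\theta=1$ identifies the three exhaustive regimes giving $\overline\theta = 0$ on $\D_0$, $\overline\theta = 1$ on $\D_1$, and the unique interior root of \eqref{solr} elsewhere, matching \eqref{optr}. Substituting $\Pi^*$ and $\overline\theta$ back confirms that the supremum equals exactly $g(t,s,e_i) - \gamma a(t,e_i) + \varphi(t,e_i) = \tfrac12 (\mu(e_i)/(\sigma(e_i)s^\beta))^2$, which is consistent with the definitions \eqref{g}--\eqref{eq:phi}.

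The main technical point I expect is admissibility of $H^*$. Since $\mu/\sigma$ is bounded above and below by the observations made after \eqref{s}, one has $|\Pi^*_t|\leq C\, S_t^{-2\beta}$ and $(\Pi^*_t)^2 S_t^{2\beta}\leq C\, S_t^{-2\beta}$; the moment estimate \eqref{int_ammiss} then follows from standard moment bounds for the regime-switching CEV process $S$ on the range $-1<\beta\le 0$ (where negative-power moments of $S$ are finite over finite horizons). The remaining requirement $\mathbb{E}[e^{-\gamma X_T^{H^*} + h(t_0,T)}] < \infty$ is automatic because the martingale property along $H^*$, established in Theorem~\ref{familyfeu}, gives $\mathbb{E}[-e^{-\gamma X_T^{H^*} + h(t_0,T)}] = -e^{-\gamma x_0}$. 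Uniqueness of the maximizers—hence of the optimal strategy—follows from the strict concavity established above (the $\Pi$-term is strictly concave quadratic, and the $\theta$-term is strictly concave thanks to \eqref{condconc}), which completes the proof.
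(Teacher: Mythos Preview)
Your identification of the optimal controls via the HJB equation is essentially the paper's approach: the paper also separates the supremum into the functions $\Psi^\Pi$ and $\Psi^\theta$, applies the first-order condition for $\Pi$, and uses condition \eqref{condconc} to obtain strict concavity in $\theta$ and the three-regime characterization \eqref{optr}. Your treatment of the integrability condition \eqref{int_ammiss} is also the same as the paper's (bound by a constant times $\esp{\int_{t_0}^T S_r^{-2\beta}\,\ud r}$).

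There is, however, a genuine circularity in your argument for the second admissibility requirement $\esp{e^{-\gamma X_T^{H^*}+h(t_0,T)}}<\infty$. You invoke ``the martingale property along $H^*$, established in Theorem~\ref{familyfeu}'' to conclude that this expectation equals $e^{-\gamma x_0}$. But the proof of Theorem~\ref{familyfeu} relies on the Verification Theorem~\ref{thver}, whose second part (the equality $u=\bar u$, which is exactly what delivers the martingale property along the optimizer) explicitly requires the existence of some $H^*\in\A$. Admissibility in the sense of Definition~\ref{def:admissible_strategies} includes precisely the integrability condition you are trying to verify. So you are using a conclusion of Theorem~\ref{familyfeu} that itself presupposes $H^*\in\A$.

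The paper resolves this by giving an independent, direct proof of $\esp{e^{-\gamma X_T^{H^*}+h(t_0,T)}}<\infty$: it writes $-\gamma X_T^{H^*}+h(t_0,T)$ explicitly, recognizes the Brownian part as the stochastic exponential $L_T$ (shown to be a true martingale in Lemma~\ref{lemma:L} via an extension of \citet{delbaen2002note}), performs the Girsanov change of measure $\ud\widetilde\P = L_T\,\ud\P$, observes that the laws of $Y$ and $C$ are unchanged, and then bounds the remaining (jump and compensator) terms using the compactness of $\mathcal Z$, the non-explosion of $N$, and condition~\eqref{lambdaint}. You need to supply an argument of this kind rather than appeal back to Theorem~\ref{familyfeu}.
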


{From the mathematical point of view, the condition \eqref{condconc} guarantees global concavity of the value function with respect to $\theta$, and hence that a unique maximizer exists. This condition is satisfied by the main actuarial calculation principles and it is implied for instance, by concavity of the reinsurance premium $b(t, e_j, \theta)$ with respect to the retention level $\theta$. The latter is satisfied under classical premium calculation principle and has the consequence that full reinsurance is never optimal. Considering a very general reinsurance premium described by the function $b(t,e_j, \theta)$, condition \eqref{condconc} implies that the set $\D_0$ may be non-empty, and consequently that full reinsurance may be optimal for certain time periods and certain market conditions.}

\begin{proof}
We observe that, because of the relation between the value process $U$ and the function $u(t,x,s,e_i)$ in \eqref{vfass}, we can define the functions  $\Psi^\Pi$ and $\Psi^\theta$ as
\begin{align}
&\Psi^\Pi(t,x,s,e_i, \Pi) = \Pi \mu(e_i) \pd{u}{x}(t,x,s,e_i) + \frac{1}{2} \Pi^{2}\sigma^2(e_i)s^{2\beta} \pds{u}{x}(t,x,s,e_i) \\
&\quad + \Pi \sigma^2(e_i) s^{2\beta+1} \pdsm{u}{x}{s}(t,x,s,e_i)\label{psipi}\\
&\Psi^\theta(t,x,s,e_i,\theta) = -b(t,e_i,\theta) \pd{u}{x}(t,x,s,e_i) \\
&\quad + \lambda(t,e_i)\! \int_{\mathcal{Z}}\!\! \left(u\big(t,x-(1-\theta)z,s,e_i\big)-u(t,x,s,e_i)\right) F(\ud z).\label{psid}
\end{align}
Then, for every $T \ge t_0$, the problem \eqref{HJBu}--\eqref{HJBu1} can be written as
\begin{align}\label{HJBp}
&\pd{u}{t}(t,x,s,e_i) + a(t,e_i)\pd{u}{x}(t,x,s,e_i) + \mu(e_i)s\pd{u}{s}(t,x,s,e_i) + \frac{1}{2}\sigma^2(e_i)s^{2\beta+2}\pds{u}{s}(t,x,s,e_i)\\
& + \sum_{j=1}^{K} u(t,x,s,e_i)q_{ij} + g(t,s,e_i)u(t,x,s,e_i)\\
& +  \sup_{\Pi \in \R} \Psi^\Pi(t,x,s,e_i, \Pi) +\sup_{\theta \in [0,1]}\Psi^\theta(t,x,s,e_i, \theta) =0,
\end{align}
for all $(t,x,s,e_i) \in [t_0,T) \times \R \times (0,+\infty) \times \mathcal E$ with the final condition $u(T,x,s,e_i)=-e^{-\gamma x},$ for all $(x,s,e_i) \in \R \times (0,+\infty) \times \mathcal E$.

We start with the computation of the optimal investment strategy. Since $\Psi^\Pi(t,x,s,e_i, \Pi)$ is a polynomial function in $\Pi$, from the first and the second order conditions and the form of the function $u(t,x,s,e_i)$ in equation \eqref{ansatz}, we get \eqref{opti}.

For the optimal reinsurance strategy, we apply a classical argument (see e.g. \cite[Proposition 4.1]{bec}).
Because of the assumptions on the function $b(t,e_i,\theta)$ and the smoothness of function $u(t,x,s, e_i)$ in \eqref{ansatz} with respect to $x$, $\Psi^\theta$ is continuous in $\theta \in [0,1]$ and twice continuously differentiable in $\theta \in (0,1)$, for every $(t,x,s,e_i) \in [t_0,T] \times \R \times (0,+\infty) \times \mathcal E$, for all $T \ge t_0$, and its first and second partial derivatives are given by
\begin{equation}
\pd{\Psi^\theta}{\theta}(t,x,s,e_i, \theta)= -\gamma e^{-\gamma x } \Bigg{\{}  \pd{b}{\theta}(t,e_i,\theta) - \lambda(t,e_i) \int_{{\mathcal Z}} e^{\gamma (1-\theta)z}z  F(\ud z) \Bigg{\}} ,
\end{equation}
\begin{equation}
\pds{\Psi^\theta}{\theta}(t,x,s,e_i, \theta)= - \gamma e^{-\gamma x } \Bigg{\{}  \pds{b}{\theta}(t,e_i,\theta) + \gamma  \lambda(t,e_i) \int_{{\mathcal Z} } e^{\gamma (1-\theta)z}z^2 F(\ud z) \Bigg{\}}.
\end{equation} 
By condition \eqref{condconc}, $\Psi^\theta(t,x,s,e_i, \theta)$ is also strictly concave in $\theta \in [0,1]$, and hence it admits a unique maximizer $\theta^{*} \in [0,1]$.
Next we observe that, by concavity of $\Psi^\theta$ with respect to $\theta$, the function  $\pd{\Psi^\theta}{\theta}(t,x,s,e_i, \theta)$ is decreasing in $\theta$ and it holds that
\begin{equation}\label{concavity} \pd{\Psi^\theta}{\theta}(t,x,s,e_i,1)\leq \pd{\Psi^\theta}{\theta}(t,x,s,e_i, \theta)\leq \pd{\Psi^\theta}{\theta}(t,x,s,e_i,0),
\end{equation}
for all $\theta \in (0,1)$. Then, the following cases arise:
\begin{itemize}
\item[a.] If $\Psi^\theta$ is increasing in $\theta \in [0,1]$, then the maximizer is realized for $\theta^*=1$.
\item[b.] If $\Psi^\theta$ is decreasing in $\theta \in [0,1]$, then the maximizer is realized for $\theta^*=0$.
\item[c.]If $\pd{\Psi^\theta}{\theta}(t, x,s,e_i,\hat{\theta})=0$ for some $\hat{\theta} \in [0,1]$, then $\theta^*=\hat{\theta}$.
\end{itemize}
We observe that $\Psi^\theta$ is increasing if and only if $(t, e_i) \in \D_1$. Indeed, because of concavity  of $\Psi^\theta$ with respect to $\theta$ (see  \eqref{concavity}), we get that $\pd{\Psi^\theta}{\theta}(t,x,s,e_i, \theta)>0$ is equivalent to say that $\pd{\Psi^\theta}{\theta}(t,x,s,e_i, 1)>0$. This implies that $\Psi^\theta$ is increasing if and only if $\pd{b}{\theta}(t, e_i, 1)\leq \lambda(t, e_i) \esp{Z_1}$. Equivalently $\Psi^\theta$ is increasing if and only if $(t, e_i) \in \D_0$, and finally $\pd{\Psi^\theta}{\theta}(t, x,s,e_i,\hat{\theta})=0$ corresponds to solve the equation \eqref{solr}.

It only remains to show that the process $H^*=(\Pi^*,\theta^*)$ is an admissible strategy. 
It is clear that $\theta^*_t\in [0,1]$ for every $t \ge t_0$ and that $\theta^*$ is $\bF$-adapted and c\`{a}dl\`{a}g, hence $\bF$-progressively measurable; the investment strategy $\Pi^*$ is also $\bF$-adapted and c\`{a}dl\`{a}g (hence $\bF$-progressively measurable), and for every $T \ge t_0$ it satisfies:
\begin{align}
&\esp{\int_{t_0}^T\left(|\Pi_r^*| +(\Pi_r^*)^2 S_r^{2\beta}\right) \ud r}\\
&= \esp{\int_{t_0}^T \left(\left|\frac{\mu(Y_r)}{\gamma \sigma^2(Y_r) S_r^{2\beta}}\right| + \frac{\mu^2(Y_r)}{\gamma^2 \sigma^4(Y_r) S_r^{2\beta}} \right)\ud r}\leq c \esp{\int_{t_0}^T S_r^{-2\beta} \ud r} < \infty,
\end{align}
for some constant $c>0$. The first inequality here is implied by the boundedness of $\mu(Y_t)$ and $\sigma(Y_t)$.
To show that $\esp{e^{-\gamma X^{H^*}_T+h(t_0,T)}}<\infty$, we observe that
in view of \eqref{wealthsol}, and recalling that $\theta^*_t=\overline{\theta}_t$, for every $t$, we have
\begin{align}
&	-\gamma X_T^{H^*}+h(t_0,T)= -\gamma x_{t_0} - \frac{1}{2}\int_{t_0}^T  \frac{\mu^2(Y_t)}{\sigma^2(Y_t) S_t^{2\beta}}\ud t - \int_{t_0}^T \frac{\mu(Y_t)}{\sigma(Y_t) S_t^{\beta}} \ud W_t\\& + \gamma \int_{t_0}^T \int_{{\mathcal Z} }  (1-\overline{\theta}_{t-}) z m(\ud t,\ud z)- \int_{t_0}^T\lambda(t, Y_t)\int_{{\mathcal Z} } \left(e^{\gamma (1-\overline\theta_{t-}) z}-1\right)F(\ud z),
\end{align}
where $X_{t_0}^H=x_{t_0}\in \R$. Then,
\begin{align}
&\esp{e^{-\gamma X^{H^*}_T-h(t_0,T)}}=e^{-\gamma x_{t_0}} \mathbb E \left[ e^{- \frac{1}{2}\int_{t_0}^T  \frac{\mu^2(Y_t)}{\sigma^2(Y_t) S_t^{2\beta}}\ud t - \int_{t_0}^T \frac{\mu(Y_t)}{\sigma(Y_t) S_t^{\beta}} \ud W_t}\right.\\& \left.\times e^{\gamma \int_{t_0}^T \int_{{\mathcal Z} }  (1-\overline{\theta}_{t-}) z m(\ud t,\ud z)}e^{- \int_{t_0}^T\lambda(t, Y_t)\int_{{\mathcal Z} } \left(e^{\gamma (1-\overline\theta_{t-}) z}-1\right)F(\ud z)}\right].\label{eq:equality1}
\end{align}
For $T>t_0$, we define the process $L=\{L_t, \ t \in [t_0,T]\}$ as
\[
L_t= e^{- \frac{1}{2}\int_{t_0}^t  \frac{\mu^2(Y_t)}{\sigma^2(Y_t) S_t^{2\beta}}\ud t - \int_{t_0}^t \frac{\mu(Y_t)}{\sigma(Y_t) S_t^{\beta}} \ud W_t};
\]
then, $L$ is an (integrable) $(\bF,\P)$-martingale. Precisely, $L$ is an exponential martingale with expected value equal to $1$ (See Lemma \ref{lemma:L} in Appendix \ref{sec:tech_res}) and defines an equivalent change of probability measure, i.e.  $\displaystyle L_T=\frac{\ud \widetilde \P}{\ud \P}\Big{|}_{\F_T}$. Moreover, the change of measure from $\P$ to $\widetilde \P$ does not modify the law of the Markov chain $Y$ and the compensator of the claim process $C$, since it only affects the Brownian motion $W$. This means that $Y$ and $C$ have the same law under $\P$ and under $\widetilde \P$.
The equation \eqref{eq:equality1} becomes:
\begin{align}
&e^{-\gamma x_{t_0}} \mathbb E\left[ L_T e^{\gamma \int_{t_0}^T \int_{{\mathcal Z} }  (1-\overline \theta_{t-}) z m(\ud t,\ud z)}e^{- \int_{t_0}^T\lambda(t, Y_t)\int_{{\mathcal Z} } \left(e^{\gamma (1-\overline\theta_{t-}) z}-1\right)F(\ud z)\ud t}\right]\\
&=e^{-\gamma x_{t_0}} \widetilde{\mathbb{E}}\left[  e^{\gamma \int_{t_0}^T \int_{{\mathcal Z} }  (1-\overline \theta_{t-}) z m(\ud t,\ud z)}e^{- \int_{t_0}^T\lambda(t, Y_t)\int_{{\mathcal Z} } \left(e^{\gamma (1-\overline\theta_{t-}) z}-1\right)F(\ud z)\ud t}\right]\\
&=e^{-\gamma x_{t_0}}{\mathbb{E}}\left[  e^{\gamma \int_{t_0}^T \int_{{\mathcal Z} }  (1-\overline \theta_{t-}) z m(\ud t,\ud z)}e^{- \int_{t_0}^T\lambda(t, Y_t)\int_{{\mathcal Z} } \left(e^{\gamma (1-\overline\theta_{t-}) z}-1\right)F(\ud z)\ud t}\right],
\end{align}
where $\widetilde{\mathbb{E}}[\cdot]$ denotes the expected value under the probability measure $\widetilde \P$, and in the last equality we have used that $Y$ and $C$ have the same law under $\P$ and under $\widetilde \P$.
In particular,
\[
e^{- \int_{t_0}^T\lambda(t, Y_t)\int_{{\mathcal Z} } \left(e^{\gamma (1-\overline\theta_{t-}) z}-1\right)F(\ud z)\ud t}\!\leq e^{\int_{t_0}^T\lambda(t, Y_t)\ud t}\!\leq e^{ \max_{j=1, \dots, K} \int_{t_0}^T\lambda(t, e_j)\ud t}\!:=c_T<\infty, \ \P-\text{a.s.}.
\]
Finally,
\begin{align}
&{\mathbb{E}}\left[  e^{\gamma \int_{t_0}^T \int_{{\mathcal Z} }  (1-\overline{\theta}_{t-}) z m(\ud t,\ud z)}e^{- \int_{t_0}^T\lambda(t, Y_t)\int_{{\mathcal Z} } \left(e^{\gamma (1-\overline\theta_{t-}) z}-1\right)F(\ud z)\ud t} \right]\\
&\leq c_T
\mathbb E\left[  e^{\gamma \int_{t_0}^T \int_{{\mathcal Z} }   z m(\ud t,\ud z)}\right]=c_T \esp{e^{\gamma \sum_{i=1}^{N_t}Z_i}}\\
&=c_T \sum_{n \ge 0}\esp{e^{\gamma \sum_{i=1}^{N_t}Z_i}\Big{|}N_T=n}\P(N_T=n)=c_T \sum_{n \ge 0}\esp{e^{\gamma Z_1}}^n\P(N_T=n)<\infty,
\end{align}
which implies the assertion.

\end{proof}

\subsection{Independent markets and comparison with the classical backward utility approach}\label{sec:indep}

{Now, we examine the case of independent markets. This example, although simpler than the general case considered above, already contains several key characteristics that allow us to discuss some important differences between the forward and the standard backward performance criteria.}

We consider an insurance framework as in Section \ref{sec:comb_model}. The financial market, instead consists of a riskless asset with price process $B_t=1$ for all $t \ge 0$ and a risky asset with price process $S$ whose drift and volatility are not affected by the factor $Y$, and hence its  dynamics follows
\[
\ud S_t=S_t(\mu\ud t + \sigma S^{\beta}_t \ud W_t), \quad S_0=s > 0,
\]
with $\mu\in \R$ and $\sigma>0$. The results can be  easily extended to the case where drift and volatility are functions of time only.

The wealth associated to a strategy $H=(\Pi, \theta)\in \mathcal A$ is given by {$X^H=\{X_t^H,\ t \ge t_0\}$}  such that
\begin{align}
	\ud X_{t}^H &=  \big{\{}  a(t,Y_{t}) - b(t,Y_{t},\theta_{t}) +\Pi_{t}\mu) \big{\}}  \ud t + \Pi_{t}\sigma S^\beta_t \ud W_{t} - (1-\theta_{t-})\ud C_{t},\label{wealth_excomp}
\end{align}
with $X_{t_0}^H=x_{t_0} \ge 0$ being the wealth {at time $t_0$}.

\noindent We can derive the optimal investment and reinsurance strategy  {$H^*=(\Pi^*,\theta^*)$} under the forward dynamic exponential utility, which is given by $\Pi^{*}_t=\Pi^*(S_t)$ where
\begin{align}\label{eq:investment_indep}
	 \Pi^{*}(s)= \frac{\mu}{\gamma \sigma^2 s^{2\beta}},
\end{align}
and $\theta^*$ is given by equation \eqref{optr}.
The optimal value satisfies $U_t(x,t_0)=-e^{-\gamma x + h(t_0,t)},$ for all $t \ge t_0$ and $x \in \R$, where now the process $h(t_0,t)$ is given by
\[
h(t_0,t)=\int_{t_0}^t \left(\frac{1}{2}\frac{\mu^2}{\sigma^2 S_r^{2\beta}}  + \gamma a(r,Y_r) -\varphi(r,Y_r)\right) \ud r,
\]
and we recall that the function {$\varphi(t,e_i)$} is given in \eqref{eq:phi}.


Next, we would like to compare the {optimal} strategies and the value processes arising from the forward and the standard backward utility preferences. To this aim,  we derive the optimal investment and reinsurance strategy for the classical {backward} exponential utility optimization setting. We fix a time horizon $T >t_0$ which coincides with the end of the investment period, and consider the optimization problem (\textbf{Problem 2})
\begin{align}\label{pb:backward_excomp}
	\sup_{H \in \mathcal A}\mathbb{E}\left[-e^{-\gamma X^H_T}\right].
\end{align}

\begin{proposition}\label{prop:backward}
	The optimal investment and reinsurance strategy $H^{B,*}=(\Pi^{B,*}, \theta^{B,*})$ is given by
	\begin{align}\label{eq:strategia_backward_excomp}
		\Pi^{B,*}(t,s)=\frac{\mu}{\gamma \sigma^2 s^{2\beta}}-\frac{2\beta J_1(t)}{\gamma \sigma s^{2\beta}}
	\end{align}
	and $\theta^{B,*}=\overline\theta$, with $\overline \theta$ provided in equation \eqref{optr}. The optimal value function satisfies
	$$V(t,x,s,e_i)=-e^{-\gamma x+h^{B}(t,s,e_i)},
	$$
	where $h^{B}(t,s,e_i)=J_1(t)s^{-2\beta}+J_2(t, e_i)$, for every $(t,x,s,e_i)\in [t_0,T]\times \R\times (0,+\infty)\times \mathcal E$.\\
	The function $J_1(t)$ is given by $J_1(t)=\ds -\frac{\mu^2}{2\sigma^2}(T-t)$, for every $t \in [t_0,T]$ and the function $J_2(t, e_i)$ solves the following system of ODEs
	\begin{align}
		\frac{\ud J_2}{\ud t}(t, e_i)=&\gamma[a(t, e_i)-b(t, e_i, \overline \theta)] - \sum_{j=1}^Ke^{J_2(t,e_j)-J_2(t,e_i)}q_{ij}+\frac{\mu^2}{2}\beta(2\beta+1)(T-t)\\
		&-\lambda(t, e_i) \int_{\mathcal Z} \left(e^{\gamma(1-\overline \theta)z}-1\right)F(\ud z), \ t \in [t_0,T)\label{eq:J2}
	\end{align}
	with the final condition $J_2(T, e_i)=0$, for all $i=1,\dots, K$.
\end{proposition}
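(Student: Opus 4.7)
The plan is to solve the classical backward problem by dynamic programming, using an exponential ansatz and then a separation of variables for $h^B$. Concretely, the value function $V(t,x,s,e_i)$ of Problem 2 should satisfy the HJB equation
\begin{align}
\sup_{(\Pi,\theta)\in\R\times[0,1]}\L^{(\Pi,\theta)}V(t,x,s,e_i)=0,\qquad V(T,x,s,e_i)=-e^{-\gamma x},
\end{align}
where $\L^{(\Pi,\theta)}$ is the generator from Lemma \ref{xy} specialized to the constant-coefficient stock dynamics. Plugging in the ansatz $V(t,x,s,e_i)=-e^{-\gamma x+h^B(t,s,e_i)}$ and factoring out $V$ transforms the HJB into a quasilinear PDE for $h^B$; since $V<0$, the supremum in the HJB corresponds to a minimization of the bracketed expression in the resulting equation for $h^B$.

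First I would extract the optimizers from the pointwise minimization. The dependence on $\theta$ in the transformed equation has exactly the same algebraic form as the term $\varphi(t,e_i)$ used in the forward problem (namely $\gamma b(t,e_i,\theta)+\lambda(t,e_i)\int_{\mathcal Z}(e^{\gamma(1-\theta)z}-1)F(\ud z)$), because the exponential ansatz makes the $x$-dependence factor out. By condition \eqref{condconc} this is strictly convex in $\theta$ and the argument of Proposition \ref{prop:optimal} yields $\theta^{B,*}=\overline{\theta}$. For the investment control, the $\Pi$-dependence is quadratic with positive leading coefficient $\tfrac12\gamma^2\sigma^2 s^{2\beta}$; the first-order condition gives
\begin{align}
\Pi^{B,*}=\frac{\mu+\sigma^2 s^{2\beta+1}\partial_s h^B(t,s,e_i)}{\gamma\sigma^2 s^{2\beta}}.
\end{align}

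Next I would substitute these optimizers back and attempt the separation ansatz $h^B(t,s,e_i)=J_1(t)s^{-2\beta}+J_2(t,e_i)$. A direct computation (using $\partial_s h^B=-2\beta J_1 s^{-2\beta-1}$, $\partial_{ss}h^B=2\beta(2\beta+1)J_1 s^{-2\beta-2}$) shows that the cross-term $2\beta\mu J_1 s^{-2\beta}$ and the quadratic-in-$J_1$ terms in the reduced HJB cancel, so the $s$-dependent part collapses to $(J_1'(t)-\tfrac{\mu^2}{2\sigma^2})s^{-2\beta}=0$, giving the ODE $J_1'(t)=\tfrac{\mu^2}{2\sigma^2}$; the terminal condition $J_1(T)=0$ (forced by $V(T,\cdot)=-e^{-\gamma x}$) then yields $J_1(t)=-\tfrac{\mu^2}{2\sigma^2}(T-t)$. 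The remaining $s$-independent terms, using $\sigma^2\beta(2\beta+1)J_1(t)=-\tfrac{\mu^2}{2}\beta(2\beta+1)(T-t)$, reproduce exactly the ODE system \eqref{eq:J2} for $J_2(\cdot,e_i)$, $i=1,\dots,K$. Substituting $J_1$ into the formula for $\Pi^{B,*}$ produces the claimed expression $\Pi^{B,*}(t,s)=\tfrac{\mu}{\gamma\sigma^2 s^{2\beta}}-\tfrac{2\beta J_1(t)}{\gamma\sigma s^{2\beta}}$ (up to the identification of constants).

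The main obstacles are, first, the well-posedness of the ODE system \eqref{eq:J2}: the right-hand side depends nonlinearly on $\{J_2(\cdot,e_j)\}_{j=1}^K$ through the finite sum $\sum_j e^{J_2(t,e_j)-J_2(t,e_i)}q_{ij}$, which is locally Lipschitz, so standard ODE theory provides a unique local solution, and a global bound (obtained by using that $q_{ii}<0$ with $q_{ij}\ge 0$ for $j\ne i$, together with the finiteness of $a$, $b(\cdot,\overline\theta)$, and $\lambda$ on $[t_0,T]\times\mathcal E$) extends it to the whole interval $[t_0,T]$; this step should be relegated to a lemma. Second, one must verify that the candidate $V$ actually equals the value function and that $H^{B,*}=(\Pi^{B,*},\theta^{B,*})$ is admissible. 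For this I would invoke the same verification theorem used in the proof of Theorem \ref{familyfeu} (Theorem \ref{thver}): smoothness of $V$ in $(t,x,s)$ and $J_2\in C^1$ give enough regularity, and admissibility of $H^{B,*}$ (the square-integrability condition \eqref{int_ammiss} and $\esp{e^{-\gamma X^{H^{B,*}}_T}}<\infty$) can be established by the same Girsanov-type change of measure used at the end of the proof of Proposition \ref{prop:optimal}, since $J_1$ and $J_2$ are bounded on $[t_0,T]$, so the extra factor $s^{-2\beta}J_1(t)$ in $\Pi^{B,*}$ does not disrupt the integrability bounds.
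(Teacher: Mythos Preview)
Your proposal is correct and follows essentially the same route as the paper: write down the HJB equation for the backward problem, insert the exponential ansatz with the separation $h^B(t,s,e_i)=J_1(t)s^{-2\beta}+J_2(t,e_i)$, extract the optimizers from the first-order conditions (the reinsurance part being identical to the argument of Proposition \ref{prop:optimal}), derive the ODEs for $J_1$ and $J_2$, and conclude by a verification step. The only minor deviation is that the paper does not invoke Theorem \ref{thver} (which carries the forward-specific factor $e^{\int g}$) but instead carries out the verification directly via It\^o's formula and the localizing sequence $\widetilde\tau_n=\inf\{t\ge t_0:\ S_t^{-2\beta}>n\text{ or }X_t<-n\}$, using boundedness of $J_1$ and $J_2$ on $[t_0,T]$; your plan to adapt the verification and admissibility checks is equivalent in spirit.
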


The proof of this result is given in Appendix \ref{app:thm_backward}. Notice that applying the transformation $\widetilde J(t, e_i)=e^{J_2(t, e_i)}$,  equation \eqref{eq:J2} can be reduced to a linear ODE.

A few considerations can be made. First of all, the standard backward approach requires that, a utility function that must hold at some future time $T$, is specified today. Instead, in the forward approach the utility is set to hold at the initial time and changes as market conditions evolve. Equivalently, the forward approach moves in the same direction of time, and therefore it may capture information about the market in a dynamic
and consistent way.

Next, we see that the forward and the backward problems share the same optimal reinsurance strategy which depends on the Markov chain $Y$, and hence it is affected by different regimes. The optimal investment strategies in the two approaches, instead, are different: in the forward case the optimal strategy consists of the myopic component only, whereas in the backward case there is an additional component which reflects the fact that the instantaneous variance of the percentage asset price change is not constant.

{As argued for instance in \citet{MZinvestment}, the value processes under backward and forward utility preferences do not coincide in general (one may always consider the special case of a  utility function that does not change over time). This is due to the fact that they are generated in  completely different ways. Indeed, in the backward case the value function process accounts for market incompleteness by estimating the future changes (this is encompassed in the function $h^B$), whereas in the forward case the value function is adjusted dynamically as time goes by, according to the arrival of new information.}

\section{Numerical experiments}\label{sec:numerics}

In this section, we employ a numerical approach to further investigate our results
and to discuss the qualitative characteristics of optimal investment and reinsurance strategies implied by our model.

We consider a two-state Markov chain $Y$, that is, $\mathcal E =\{e_1, e_2\}$; without loss of generality we may assume that $e_1$ represents a more favorable state of the combined insurance-financial market and $e_2$ is a less favorable state. In the following, we refer to $e_1$ (respectively, $e_2$) as the {\em good} (respectively, {\em bad}) state. 
The infinitesimal generator matrix $Q$ has entries $\{q_{ij}\}_{i,j\in \{1,2\}}$  such that $q_{12}>q_{21}$: this choice suggests that it is more likely for the market to switch from the good
 state to the bad state than the opposite.

For the sake of simplicity, we assume that claims arrival intensity $\lambda$ is of exponential type, i.e.
\begin{equation}\label{eq:lambda}\lambda(t,e_j)=\lambda_0 e^{k_1t+k_2(e_j)},
\end{equation}
where  $\lambda_0=e^{Y_0}$, $k_1>0$, for every $t \in [0,+\infty)$, and the function $k_2(e_j)=j\cdot k_2$ for all $j=1,2$ and some $k_2>0$. Moreover, claim size distribution is assumed to be truncated exponential. 
We assume that insurance and financial operations take place in one year, starting from today (i.e. $t_0=0$); this means that we analyze our theoretical results in a time interval $[0,T]$, with $T=1$.
Insurance and reinsurance premia are computed according to the intensity-adjusted variance principle (see \cite{bec}), and hence, they are specifically given by
\begin{align}\label{rp}
b(t,e_j,\theta)&= \lambda(t,e_j)\esp{Z_1}\theta + 2 \delta_R  \lambda(t,e_j)\esp{Z_1^2}\left(1+T\lambda(t,e_j)\right)\theta^2,\\
a(t, e_j)&=  \lambda(t,e_j)\esp{Z_1} + 2 \delta_I \lambda(t,e_j)\esp{Z_1^2}\left(1+T\lambda(t,e_j)\right), \label{ip}
\end{align}
for $j=1,2$, and $\delta_R >0$ and $\delta_I>0$ denote the reinsurance and insurance safety loading, respectively. In equations \eqref{rp} and \eqref{ip} we reported $T$ to underline the dependence on contracts maturity, which will be omitted later, plugging $T=1$.
Finally, we set the following parameter values to $q_{12}=2$, $q_{21}=1$, $k_1=0.5$, $k_2=1$ and we fix the insurance and the reinsurance safety loading to $\delta_I=0.05$ and $\delta_R=0.1$, respectively.

\subsection{Dependent markets}

We consider the general financial market which consists of a locally risk-free asset $B$ with zero interest rate and a risky asset $S$ which follows a CEV model with drift and volatility that depend on the Markov chain $Y$ as described by the SDE \eqref{s}.
According to our interpretation of states $e_1$ and $e_2$, we assume that $\mu_1 > \mu_2$ and $\sigma_1<\sigma_2$, where $\mu_j$ and $\sigma_j$ represent the expected rate of return and the volatility of the stock, respectively, in the $j$-th regime, for $j=1,2$. In fact, it is reasonable to associate to a good state for the combined market a higher rate of return and eventually smaller fluctuations, and viceversa lower rate of return and larger volatility to the bad state. This mechanism is well known in economics (see, e.g.  \citet{FRENCH} and \citet{HAMLIN}). Our framework, however, also involves the actuarial market and the interpretation of the Markov chain is not of a purely economic nature, but may also incorporate reactions to events, such as natural disasters, pandemics or even climate and environmental states, which have an impact on both insurance losses and the general trend of financial assets. Equation \eqref{eq:lambda}, for instance, shows that the common factor $Y$ affects the claim arrival intensity in a way that the average number of claims is smaller in the good state and larger in the bad state. We report the
parameters choice for the coefficients $\mu_j$ and $\sigma_j$, for $j= 1,2$ in Table \ref{tab}.

  \begin{table}[h]
	\caption{Parameter set for the rate of return and the volatility of the stock price in the two regimes.}
	\label{tab}
	\begin{tabular}{|c|cc|}
		\hline
		Regime  & $\mu$ & $\sigma$  \\
		\hline	
		$e_1$ (good)  & $0.1$ & $0.1$  \\
		$e_2$ (bad)  & $0.05$ & $0.2$ \\
		\hline	
\end{tabular}\end{table}

{To illustrate the typical sample path of an optimal strategy, we provide in Figure \ref{os} the plot of one trajectory of the optimal dynamic investment and reinsurance strategy} given by Proposition \ref{prop:optimal}. We observe that both the investment strategy and the reinsurance level exhibit jumps at switching times of the Markov chain. Moreover, if the state of the market is good, then the insurance company opts to invest more in the stock and to reinsure a greater percentage of losses. {The model specification considered in this example (i.e. the form of the intensity function, the claim size distribution and the reinsurance premium) implies that the optimal reinsurance level is given by $\theta^*_t=\hat \theta_t$ for all $t\in [0,1]$, where $\hat \theta_t=\hat \theta(t, e_j)$ is the solution of the equation}
\[
\esp{Z_1} + 4 \delta_R \theta \esp{Z_1^2} (1+\lambda(t,e_j))=\int_{\mathcal Z}ze^{\gamma (1-\theta) z}F(\ud z).
\]
{It is easily seen, using the {\em Implicit function theorem} that the derivative of $\theta^*$ with respect to time is negative for every $t$, and this explains the piecewise linearly decreasing behaviour of the optimal reinsurance level.}\footnote{{Let $\displaystyle G(t, e_j, \theta)=\esp{Z_1} + 4 \delta_R \theta\esp{Z_1^2}\left(1+T\lambda(t,e_j)\right)-\int_{\mathcal Z}ze^{\gamma (1-\theta) z}F(\ud z),$ then, for every fixed $j=1, \dots, K$ we have that
$$\frac{\ud \theta(t, e_j)}{\ud t }=-\frac{\frac{\ud G(t, e_j, \theta(t, e_j))}{\ud t}}{\frac{\ud G(t, e_j, \theta(t, e_j))}{\ud \theta}}=-\frac{4\theta(t, e_j) \delta_R \esp{Z_1^2} \lambda_0 k_1 e^{k_1 t+k_2(e_j)}}{4 \delta_R \esp{Z_1^2} \left(1+ e^{k_1 t+k_2(e_j)}\right)+\gamma\int_{\mathcal Z}z^2e^{\gamma (1-\theta(t, e_j)) z}F(\ud z)}<0.$$}}
\begin{figure}[h]
	\includegraphics[width=7.5cm]{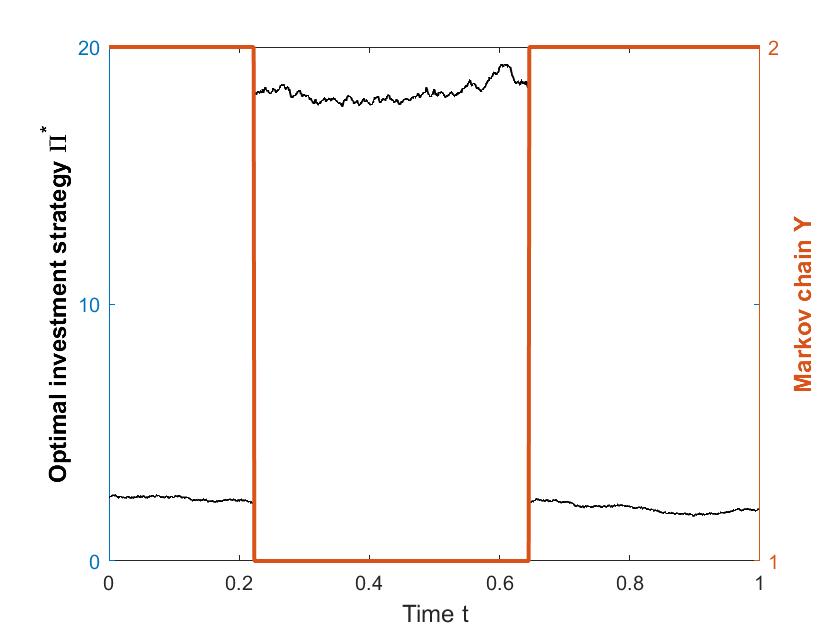}
	\includegraphics[width=7.5cm]{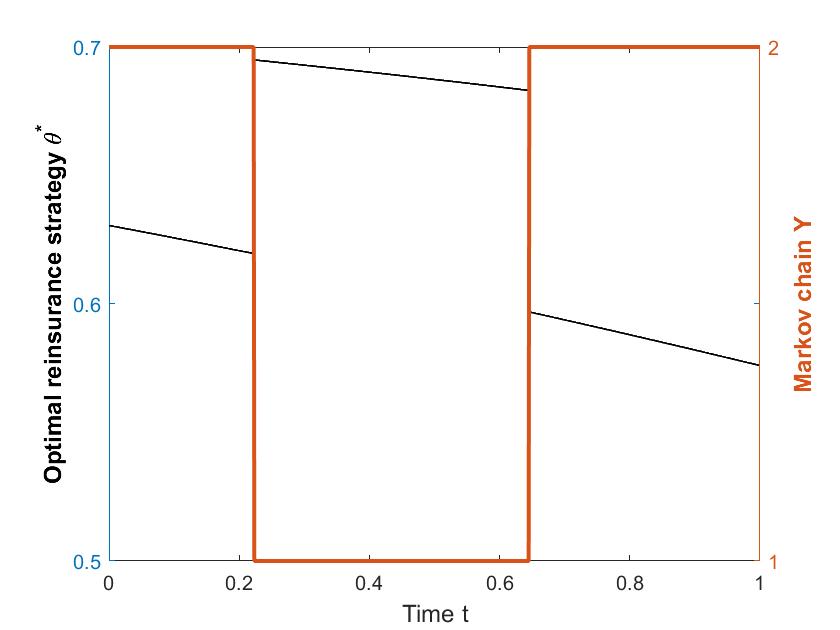}
	\caption{\label{os} The optimal investment strategy (left panel) and the optimal reinsurance strategy (right panel), as functions of time, with parameters $S_{0}=1$, $\beta=-0.5$ and $\gamma =0.5$.}
\end{figure}

In the sequel, we perform a sensitivity analysis of the optimal investment portfolio in order to study the effect of model parameters on the insurance company decision, in both economic regimes.

In Figure \ref{os(beta)}, we investigate the effect of the elasticity coefficient $\beta$ on the optimal investment strategy at a certain date $t^* \in [0,1]$. We observe that if the stock price is smaller than $1$ (left panel), the optimal investment strategy is positively correlated to the parameter of elasticity. Otherwise (right panel), the amount invested in the risky asset decreases as long as $\beta$ increases. Moreover, we observe that the investment policy is always less aggressive when the combined insurance-financial market is in the bad state (dashed lines). 

\begin{figure}[h]
	\includegraphics[width=7.5cm]{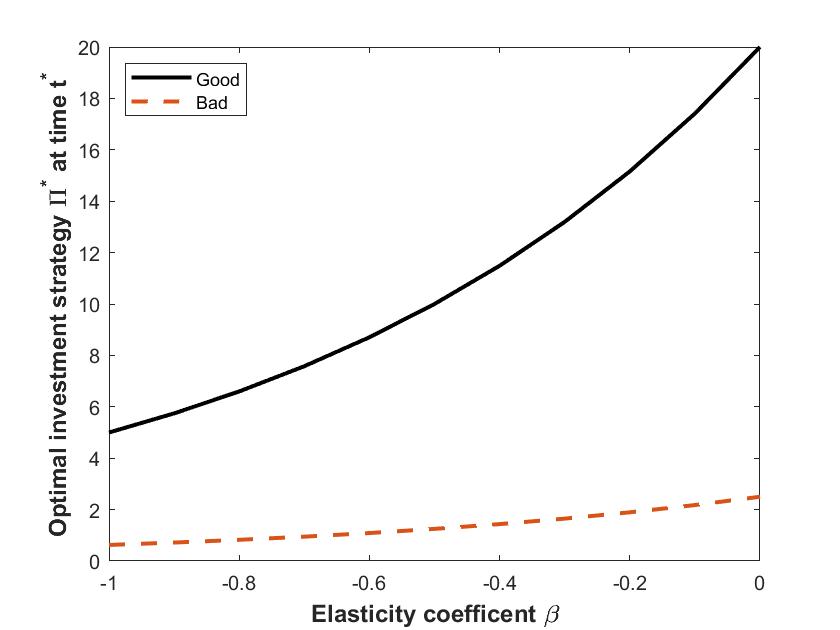}
	\includegraphics[width=7.5cm]{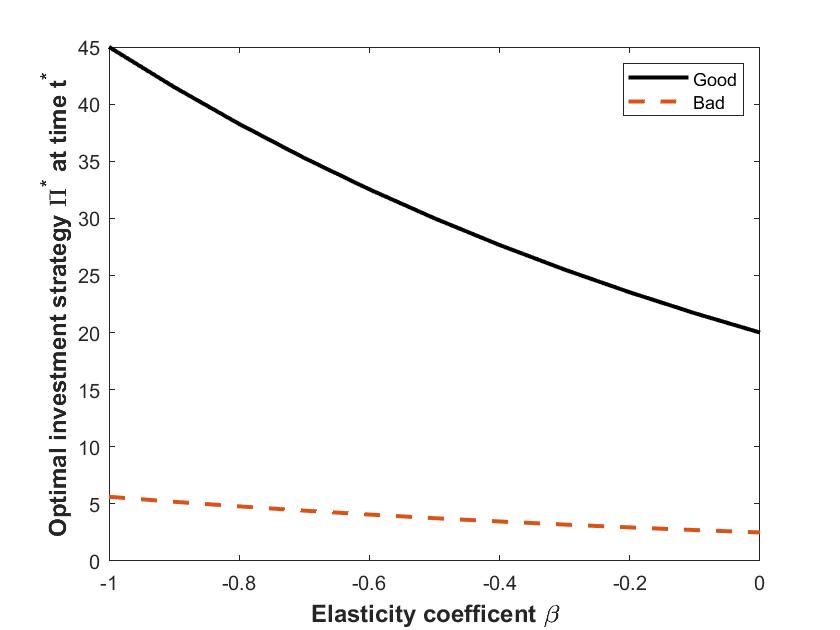}
	\caption{\label{os(beta)} Optimal investment strategy at a fixed time $t^*$ as a function of elasticity coefficient $\beta$, for different values of stock price $S_{t^*}$, when the economic regime is $e_1$ (solid line) or $e_2$ (dashed line). {Parameter} values: $\gamma=0.5$, $S_{t^*}=0.5$ (left panel) and $S_{t^*}=1.5$ (right panel).}
\end{figure}

Figure \ref{os(gamma)} refers to the sensitivity analysis with respect to the risk aversion parameter $\gamma$. As expected, there is an inverse relationship on the values of the stochastic volatility, under both regimes; in other terms, if the risk aversion increases, then the insurance company finds {it} more convenient to invest in the risk-free asset. Moreover, if the market is in the good state, the strategy is  more sensitive to variations of the risk aversion parameter.
\begin{figure}[h]
	\includegraphics[width=7.5cm]{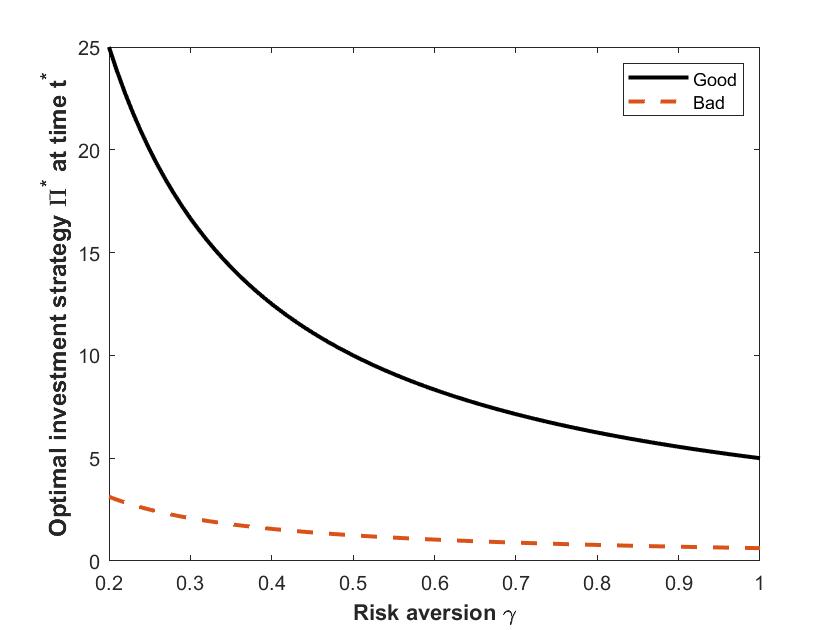}
	\includegraphics[width=7.5cm]{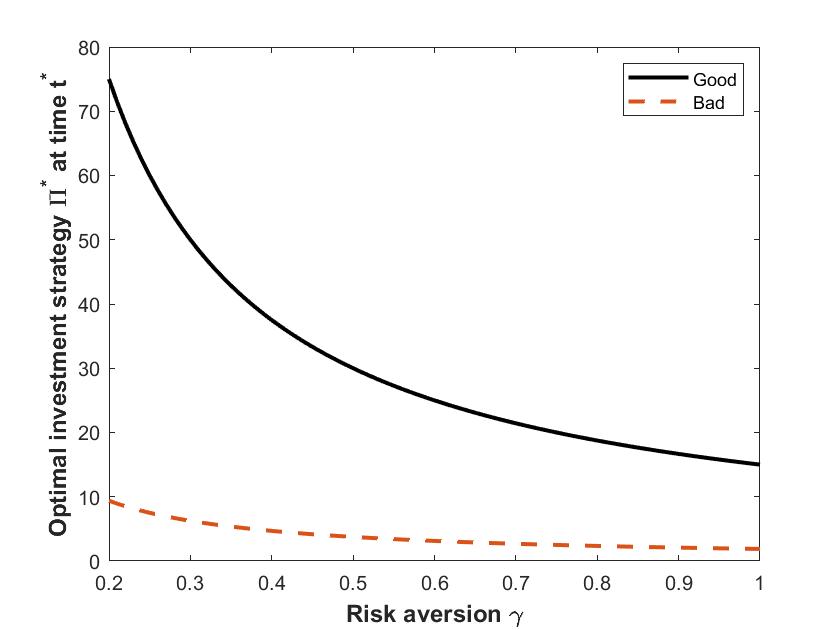}
	\caption{\label{os(gamma)} Optimal investment strategy at a fixed time $t^*$ as a function of risk aversion coefficient $\gamma$, for different values of stock price $S_{t^*}$ and constant elasticity coefficient $\beta$, when the market state is $e_1$ (solid line) or $e_2$ (dashed line). {Parameter} values: $\beta=-0.5$, $S_{t^*}=0.5$ (left panel) and $S_{t^*}=1.5$ (right panel).}
\end{figure}

\subsection{Independent markets}

{We now provide a few numerical illustrations on the case of independent markets discussed in Section \ref{sec:indep}}.

For the numerical analysis below, we take $Y$ to be a two-state Markov chain, and let  $(\bar{p},1-\bar{p})$ denote the stationary distribution of $Y$, i.e. $\bar{p}=\frac{q_{21}}{q_{12} + q_{21}}$. For consistency, we calculate the appreciation rate and the volatility of the stock price, $\mu$  and $\sigma$, as the average of the values $\mu_1, \mu_2$ and $\sigma_1, \sigma_2$, according to the stationary distribution of $Y$, that is $\mu=\bar{p}\mu_1 + (1-\bar{p}) \mu_2$ and $\sigma=\bar{p}\sigma_1 + (1-\bar{p}) \sigma_2$.

We recall that reinsurance and insurance premia are evaluated according to the intensity-adjusted variance principle and that {the} claim size distribution is exponential with expectation equal to $1$.

Figure \ref{os_comp1} and Figure \ref{os_comp} plot the difference between the strategies under forward and backward utilities, as a function of time (Figure \ref{os_comp1}), as a function of the elasticity parameter $\beta$ (Figure \ref{os_comp}, left panel) and as a function of the risk aversion coefficient $\gamma$ (Figure \ref{os_comp}, right panel). As before, we set $t_0=0$ and we consider the horizon time $T=1$. By \eqref{eq:strategia_backward_excomp}, it is clear that the optimal forward investment strategy $\Pi^*$ is more aggressive than the {backward} one $\Pi^{B,*}$ but the difference between them decreases over the time interval and it disappears at the end of trading horizon, as it is seen in Figure \ref{os_comp1}. Moreover, we notice that the higher the initial price of the risky asset is, the higher the initial gap. A similar behavior is observed in the left and in the right panel of Figure \ref{os_comp}, where we illustrate the difference in optimal initial portfolios with respect to the elasticity coefficient and the risk aversion parameter, respectively, for different initial values of the stock price.

\begin{figure}[h]
    \includegraphics[width=7.5cm]{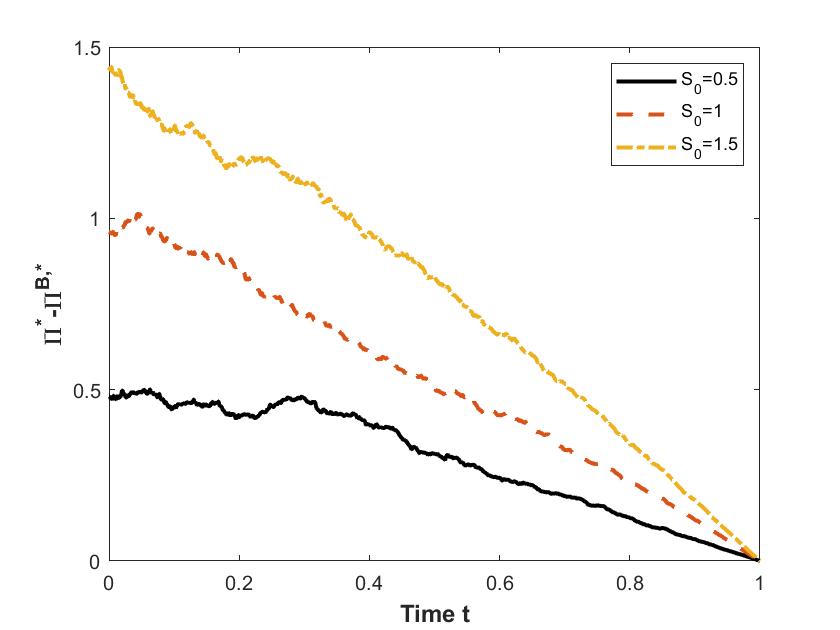}
	\caption{\label{os_comp1}One trajectory of the optimal investment as a function of time for $\beta=-0.5$ and $\gamma=0.5$.}
\end{figure}

\begin{figure}[h]
	\includegraphics[width=7.5cm]{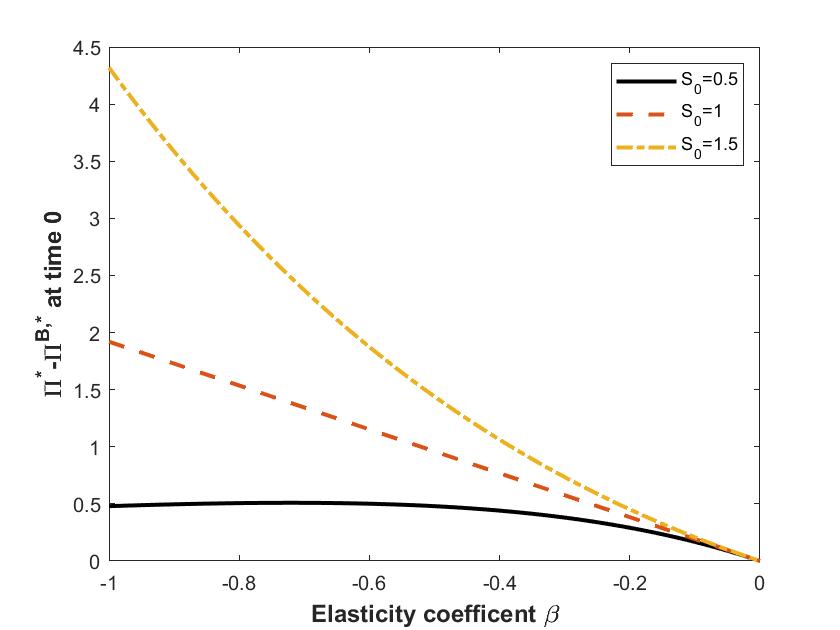}
	\includegraphics[width=7.5cm]{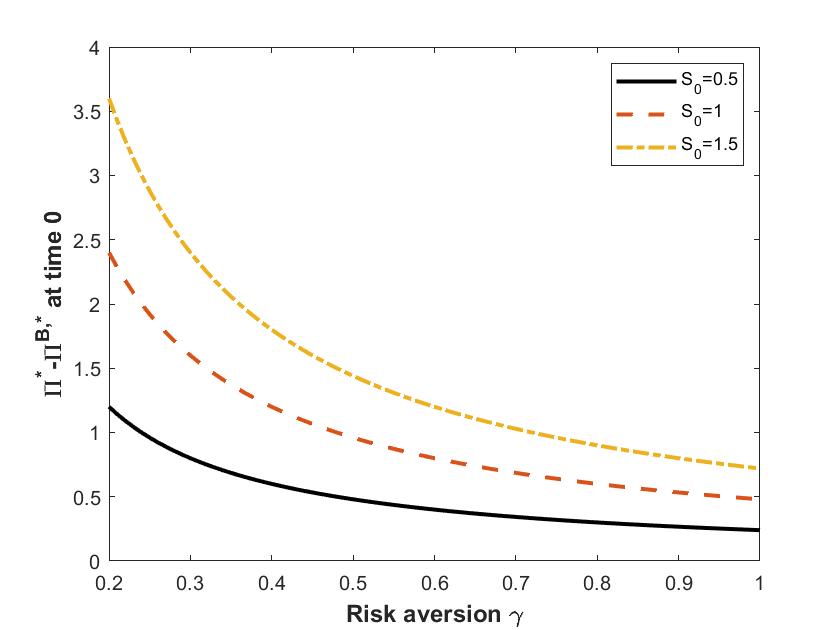}
	\caption{\label{os_comp} Left panel: Optimal investment as a function of elasticity coefficient at time $0$ for $\gamma=0.5$. Right panel: Optimal investment as a function of risk aversion parameter at time $0$ for $\beta=-0.5$.}
\end{figure}

The optimal strategies under the forward and the backward criterion lead to different value functions. In particular, at the initial time, the optimal value corresponding to the backward utility is given by  $V(0,x,s,e_i)=-e^{-\gamma x+ J_1(0)s^\beta+J_2(0, e_i)}$, whereas the optimal value in the forward utility simply satisfies $U(x, 0)=-e^{-\gamma x}$. Figure \ref{VFdiff} plots the difference between value functions at the initial time (in percentage), i.e. $\Delta(s,e_i):=\frac{V(0,x,s,e_i)-U(x,0)}{U(x,0)}$ (notice that this quantity is independent of the initial wealth), as functions  of the initial stock price, in states $e_1$ and $e_2$, assuming that the Markov chain $Y$ has only two states.
\begin{figure}[h]
	\includegraphics[width=7.5cm]{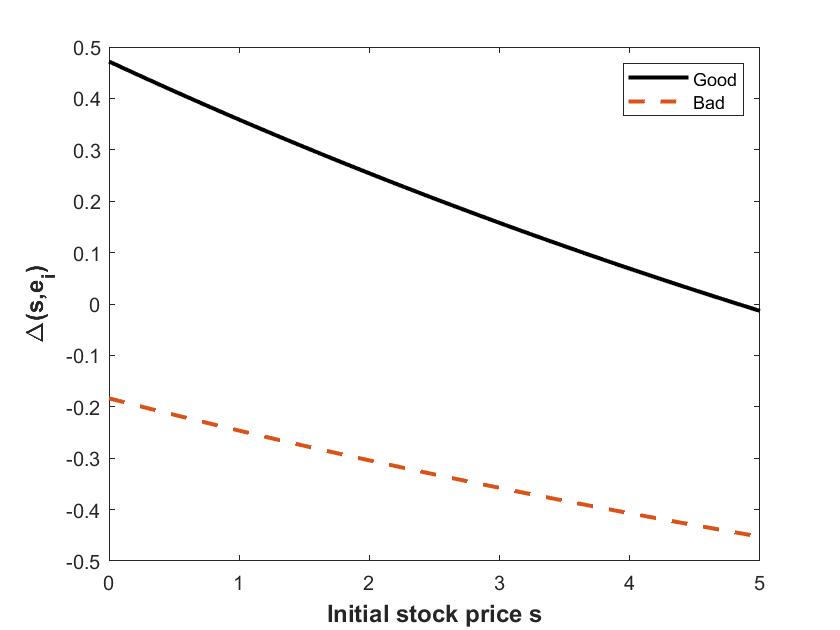}
	\caption{\label{VFdiff} The effect of stock price on the difference between the backward optimal value functions and the forward one (in percentage) at time $0$, when the market state is $e_1$ (solid line) or $e_2$ (dashed line). {Parameter} values: $\beta=-0.5$, $\gamma=0.5$}
\end{figure}
We point out that the gap between the backward and the forward values at initial time is decreasing with respect to the stock price at $t=0$, in both economic regimes. 

\section{Conclusions}\label{sec:conclusion}
{In this paper we have analyzed an optimal investment and reinsurance problem in a regime-switching market model for an insurance company with forward exponential utility preferences. 
We have proposed an interdependent insurance-financial market model where a common stochastic factor, which is modeled as a continuous time finite state Markov chain, affects the stock price and the claim arrival intensity. In this framework, we have constructed analytically a forward dynamic exponential utility and
characterized the optimal investment/reinsurance strategy.
We have also highlighted the differences between forward performance criteria and standard backward performance criteria in the case of independent markets, both analytically and numerically. 
Specifically, we have observed that the optimal reinsurance strategy is the same in the two approaches, whereas the optimal investment strategies differ, with the backward one being always smaller that the forward investment strategy. We have conducted numerical experiments, in the case of a two-state Markov chain, that confirm that the optimal forward investment strategy is more aggressive, especially for larger values of the stock price. Moreover, by comparing optimal forward and backward value functions, we have observed that the difference between them (in percentage) at initial time is decreasing with respect to the initial stock price, in both economic regimes. 
A sensitivity analysis on the general setting in case of a two-state Markov chain has 
highlighted some interesting features of the optimal forward strategies. 
We have investigated the effect of the elasticity parameter and the risk aversion coefficient on the optimal investment strategy, in both regimes. We pointed out that the relation between the amount invested in the risky asset and the parameter of elasticity also depends on the stock price, according to the CEV dynamics; whereas, the optimal investment strategy is always negatively correlated to the risk aversion coefficient.
Another interesting result is that the insurance company always opts to invest more in the risky asset when the combined insurance-financial market is in the good state (i.e. when the average number of claims is small and the stock price presents a high rate of return and small fluctuations).\\
It would be interesting to see if these features of the value function and the optimal strategies are preserved under different types of forward utilities (i.e. non zero volatility case). This will be done in a future work.}

\section*{Acknowledgments}
{The authors  are member of INDAM-Gnampa and their work has been partially supported through the Project U-UFMBAZ-2020-000791.}

\appendix

\section{Technical results}\label{sec:tech_res}





\subsection{Proof of Lemma \ref{xy}}\label{app:xy}

To prove the result, we first characterize the martingale $M^Y$ in the semimartingale decomposition of the Markov chain $Y$.  Let $\{\tau_n\}_{n \in \bN}$ be the sequence of jump times of $Y$ and denote by $m^Y$ the jump measure of $Y$, which is given by \begin{equation}
	m^Y([0,t],\{e_j\}) := \sum_{{n \ge 1}} \I_{\{ Y_{\tau_n}=e_j\}} \I_{\{\tau_n \geq t\}},
\end{equation}
with compensator
\begin{equation}
	\nu^Y([0,t],\{e_j\}) = \int_0^t \sum_{\substack{i,j=1,\\i \neq j}}^K q_{ij} \I_{\{ Y_{r-}=e_i\}} \ud r,
\end{equation}
for every $t\geq 0$. Hence, we get that
\begin{equation}
	Y_t=Y_0 + \int_0^t (e_j - Y_{r-})q_{Y_{r-}j} \ud r + \int_0^t \sum_{j=1}^{K} (e_j - Y_{r-}) (m^Y-\nu^Y) (\ud r,\{e_j\}),
\end{equation} for every $t \geq 0$\footnote{with a slight abuse of notation we identify $q_{Y_{r-}j}|_{Y_{r-}=i}=q_{ij}$.}.
Now, let $f:[0,+\infty)\times \R \times (0,+\infty)\times \mathcal E \to \R$ be a function in $C_b^{1,2,2}$ and $H=(\Pi, \theta)\in \R \times [0,1]$ constant. Then, the result follows by applying It\^{o}'s formula to $f(X^H,S,Y)$. Indeed, we get that $\{f(t,X^H_t,S_t,Y_t), t \ge 0\}$ has the semimartingale decomposition
\[
f(t,X^H_t,S_t,Y_t)=f(0,X^H_0,S_0,Y_0)+\int_0^t \mathcal L^Hf(r, X^H_r,S_r,Y_r) \ud r + M^f_t, \quad t \ge 0,
\]
where $M^f=\{M^f_t, \ t \ge 0\}$ is the $(\bF,\P)$-martingale null at $t=0$ given by
\begin{align}
&\ud M^f_t= \left(\sigma(Y_t)S_t^{\beta+1}\pd{f}{s}(t,X^H_t,S_t,Y_t) + \Pi_t \sigma(Y_t) S^{\beta}_t \pd{f}{x}(t, X^H_t,S_t,Y_t)\right)\ud W_t \\
& + \int_{\mathcal Z} \left(f\big(t,X_{t-}^H-(1-\theta)z,S_t,Y_t\big)-f(t,X_{t-}^H,S_t,Y_t)\right) \left(m(\ud t, \ud z)-\lambda(t, Y_{t-})F(\ud z)\ud t\right)\\
& + \sum_{j=1}^{K} \left(f\big(t,X_t^H,S_t,e_j\big)-f(t,X_{t}^H,S_t,Y_{t-})\right) (m^Y-\nu^Y) (\ud t , \{e_j\}).
\end{align}

\subsection{The verification theorem}\label{app:verification}

\begin{theorem}[Verification Theorem]\label{thver}
	Let $t_0\ge 0$ be the normalization point and $T \ge t_0$. Let $\overline{u}: [t_0,T] \times \R \times (0,+\infty) \times \mathcal{E} \longrightarrow (-\infty,0)$ be a smooth solution of the HJB equation \eqref{HJBu}--\eqref{HJBu1} (i.e. the function $\overline{u}(t,x,s,e_j)$ is $\mathcal C^1$ in $t$ and $\mathcal C^2$ in $(x,s)$, for all $j=1,...,K$), which satisfies
	\begin{itemize}
		\item [\rm{(i)}] $\displaystyle{\esp {\int_{t_0}^{T} e^{\int_{t_0}^r g(l,S_l,Y_l) \ud l} \Big( \Pi_{r}\sigma(Y_{r})S_v^\beta \pd{\overline{u}}{x}(r,X_{r}^H,S_r,Y_r) \Big)^{2} \ud r }< \infty}$,
		\item [\rm{(ii)}] $\displaystyle{ \esp{ \int_{t_0}^{T} e^{\int_{t_0}^r g(l,S_l,Y_l) \ud l} \Big( \sigma(Y_r)S_r^{\beta+1} \pd{\overline{u}}{s}(r,X_{r}^H,S_r,Y_r) \Big)^{2} \ud r }< \infty}$,
		\item [\rm{(iii)}] $\displaystyle{ \esp{ \int_{t_0}^{T} e^{\int_{t_0}^r g(l,S_l,Y_l) \ud l}\sum_{j=1}^{K} \Big{ \{ } \overline{u}\big{(}r,X_{r}^H,S_r,e_j \big{)} - \overline{u}(r,X_{r}^H,S_r,Y_{r-}) \Big{\}} \nu^Y(\ud r,\{e_j\}) }< \infty}$,
		\item [\rm{(iv)}] $\displaystyle \mathbb{E}\Big[\int_{t_0}^{T} \!\!\!e^{\int_{t_0}^r g(l,S_l,Y_l) \ud l} \lambda(r,Y_{r}) \\ \qquad \times \sup_{z \in \mathcal{Z}} \Big{|} \overline{u}(r,X_{r-}^H-(1-\theta_{r-})z,S_r,Y_r)-\overline{u}(r,X_{r-}^H,S_r,Y_r) \Big{|}  \ud r \Big]\! <\! \infty.$
	\end{itemize}

Then, $u(t,x,s,e_i) \le \overline{u}(t,x,s,e_i)$, for every admissible control $H \in \A$ and for every $(t,x,s,e_i) \in [t_0,T] \times \R \times (0,+\infty) \times \mathcal E$.

Moreover, if $\overline{u}(T,x,s,e_i)=u(T,x,s,e_i)$, for every $(x,s,e_i) \in \R \times (0,+\infty) \times \mathcal E$ and there exists $H^* \in \A$ such that $\L^{H^*}\overline{u}(t,x,s,e_i)+g(t,x,s,e_i)\overline{u}(t,x,s,e_i)=0$, for every $(t,x,s,e_i) \in [t_0,T[ \times \R \times (0,+\infty) \times \mathcal E$, then $u=\overline{u}$ in $[t_0,T] \times \R \times (0,+\infty) \times \mathcal E$.
\end{theorem}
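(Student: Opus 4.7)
The plan is to establish the supermartingale property of a suitably chosen process built from $\overline{u}$, then to upgrade it to a true martingale along the candidate optimal strategy $H^{*}$. Fix $H \in \A$ and $(t,x,s,e_i) \in [t_0,T] \times \R \times (0,+\infty) \times \mathcal E$, and define
\begin{equation}
M_r^H := e^{\int_{t}^{r} g(l,S_l,Y_l) \ud l} \, \overline{u}(r, X_r^H, S_r, Y_r), \qquad r \in [t,T],
\end{equation}
with $X_t^H=x$, $S_t=s$, $Y_t=e_i$. Since $\overline{u}(\cdot,\cdot,\cdot,e_j)$ is $\mathcal{C}^{1,2,2}$ for every $e_j \in \mathcal E$, I would apply It\^o's formula (using the jump-diffusion decomposition underlying the generator $\L^H$ derived in Lemma \ref{xy}, together with the semimartingale representation of $Y$ via its jump measure $m^Y$ and compensator $\nu^Y$ recalled in Appendix \ref{app:xy}) to obtain
\begin{equation}
\ud M_r^H = e^{\int_{t}^{r} g(l,S_l,Y_l) \ud l} \Big[ \big( \L^{H_r} \overline{u} + g\, \overline{u} \big)(r, X_{r-}^H, S_r, Y_{r-}) \, \ud r + \ud N_r^H \Big],
\end{equation}
where $N^H$ is a purely discontinuous/continuous local martingale whose components are the stochastic integrals with respect to $W$, $m(\ud r,\ud z)-\lambda(r,Y_{r-})F(\ud z)\ud r$, and $m^Y-\nu^Y$.

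The key step is the use of the HJB equation \eqref{HJBu}: since $\overline{u}$ satisfies $\sup_{H} \L^{H} \overline{u} + g\, \overline{u} = 0$, the drift of $M^H$ is non-positive for every admissible control, so $M^H$ is a local supermartingale. To promote this to a true supermartingale, I would introduce a localizing sequence $\{\tau_n\}_{n \in \bN}$ (as in Step 2 of the proof of Theorem \ref{familyfeu}) and invoke conditions (i)--(iv) to apply the dominated convergence theorem and pass to the limit $n \to +\infty$. Explicitly, condition (i) controls the $L^2$-norm of the Brownian integrand coming from $\Pi_r \sigma(Y_r) S_r^{\beta} \partial_x \overline{u}$; condition (ii) handles the Brownian integrand from $\sigma(Y_r) S_r^{\beta+1} \partial_s \overline{u}$; condition (iii) ensures integrability of the Markov-chain compensator; and condition (iv) bounds the jump integrand from $m - \nu$ uniformly in the claim size $z$. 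Taking expectations then yields
\begin{equation}
\overline{u}(t,x,s,e_i) \geq \mathbb{E}_{t,x,s,e_i}\!\Big[ e^{\int_{t}^{T} g(l,S_l,Y_l) \ud l} \, \overline{u}(T, X_T^H, S_T, Y_T) \Big].
\end{equation}
Using the boundary identification $\overline{u}(T,\cdot)=u(T,\cdot)=-e^{-\gamma\cdot}$ on the right-hand side and taking the supremum over $H \in \A$, I obtain $\overline{u}(t,x,s,e_i) \geq u(t,x,s,e_i)$, which is the first claim.

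For the second claim, the existence of $H^{*} \in \A$ attaining the pointwise supremum in the HJB equation means that $\L^{H^{*}} \overline{u} + g\,\overline{u} \equiv 0$ along the controlled trajectory. Substituting into the It\^o decomposition of $M^{H^{*}}$ shows that its drift vanishes, so $M^{H^{*}}$ is a genuine local martingale; the same integrability conditions (i)--(iv), evaluated at $H^{*}$, upgrade it to a true martingale on $[t,T]$. Taking expectations gives equality in the inequality above for $H=H^{*}$, and combined with the supremum definition of $u$, this forces $u(t,x,s,e_i)=\overline{u}(t,x,s,e_i)$ on $[t_0,T]\times\R\times(0,+\infty)\times\mathcal E$.

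The main technical obstacle is the passage from the local (super)martingale property to the true (super)martingale property. The localization argument is standard, but one must be careful that conditions (i)--(iv) truly dominate the absolute values of the integrands arising from It\^o's formula so that dominated convergence can be applied uniformly in $n$. In particular, the $m^Y$-jump term and the $m$-jump term require assumptions (iii) and (iv), which are specifically tailored so that $|\overline{u}(r,X_{r-}^H-(1-\theta_{r-})z,S_r,Y_r)-\overline{u}(r,X_{r-}^H,S_r,Y_r)|$ and the analogous Markov-chain differences remain integrable against $\lambda(r,Y_{r-})F(\ud z)\ud r$ and $\nu^Y(\ud r,\{e_j\})$ respectively, even after being multiplied by the exponential weight $e^{\int_t^r g(l,S_l,Y_l)\ud l}$.
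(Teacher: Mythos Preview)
Your proposal is correct and follows essentially the same route as the paper: apply It\^o's formula to $e^{\int_t^r g\,\ud l}\,\overline{u}(r,X_r^H,S_r,Y_r)$, use the HJB inequality to make the drift nonpositive, and invoke conditions (i)--(iv) so that the stochastic-integral part is a true martingale, then take conditional expectations and the terminal condition to conclude. The only minor difference is that the paper dispenses with your localization/dominated-convergence step altogether---since (i)--(iv) are assumed hypotheses, they directly certify that each stochastic integral (Brownian and compensated-jump) is a genuine martingale, so no localizing sequence is needed here.
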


\begin{proof}
	Let $H \in \A$ be an admissible control. Using equations \eqref{wealth} and \eqref{s} and applying It\^{o}'s formula to $e^{\int_{t_0}^t g(r,X_r^H,Y_r) \ud r}\overline{u}(t,X_t^H,S_t,Y_t)$, we have that
	\begin{align}
		& e^{\int_{t_0}^T g(r,S_r,Y_r) \ud r}\overline{u}(T,X_{T}^H,S_T,Y_{T}) =  e^{\int_{t_0}^t g(r,S_r,Y_r) \ud r}\overline{u}(t,x,s,e_i) \\ &  +  \int_{t}^{T} e^{\int_t^r g(l,S_l,Y_l) \ud l} \Big[ \L^H \overline{u}(v,X_v^H,S_v,Y_v) + g(r,S_r,Y_r) \overline{u}(r,X_r^H,S_r,Y_r)\Big] \ud r\\
& +\int_{t}^{T}e^{\int_t^r g(l,S_l,Y_l) \ud l} \Pi_{r}\sigma(Y_{r})S_r^\beta\pd{\overline{u}}{x}(r,X_{r}^H,S_r,Y_{r})  \ud W_{r} \\
&  + \int_{t}^{T}e^{\int_t^r g(l,S_l,Y_l) \ud l} \sigma(Y_{r})S_r^{\beta+1}\pd{\overline{u}}{s}(r,X_{r}^H,S_r,Y_{r})  \ud W_{r} \\
& + \int_{t}^{T}\int_{\mathcal{Z}} \!\! e^{\int_t^r g(l,S_l,Y_l) \ud l} \sum_{j=1}^{K}\left\{\overline{u} \big(r,X_{r}^H,S_r,e_j \big) - \overline{u}(r,X_{r}^H,S_r,Y_{r-}) \right\}  (m^Y - \nu^Y) (\ud r,\{e_j\}) \\
& + \int_{t}^{T}\int_{\mathcal{Z}}  e^{\int_t^r g(l,S_l,Y_l) \ud l} \left(\overline{u} \big(r,X_{r-}^H -(1-\theta_{r-})z,S_r,Y_r \big) - \overline{u}(r,X_{r-}^H,S_r,Y_r) \right)\\
&  \qquad \times (m(\ud r, \ud z)- \lambda(r,Y_{r-}) F(\ud z) \ud r),\label{itoV}
	\end{align}
	where $\mathcal L^H$ is introduced in \eqref{mgenxy}. Let $M=\{M_t,\ t \in [t_0,T]\}$ be the process given by
	\begin{align}
		&M_t = \int_{t_0}^{t}e^{\int_t^r g(l,S_l,Y_l) \ud l} \Pi_{r}\sigma(Y_{r})S_r^\beta\pd{\overline{u}}{x}(r,X_{r}^H,S_r,Y_{r})  \ud W_{r} \\ & + \int_{t_0}^{t}e^{\int_t^r g(l,S_l,Y_l) \ud l} \sigma(Y_{r})S_r^{\beta+1}\pd{\overline{u}}{s}(r,X_r^H,S_r,Y_{r})  \ud W_{r} \\ &  + \int_{t_0}^{t}\int_{\mathcal{Z}} \!\! e^{\int_t^r g(l,S_l,Y_l) \ud l} \sum_{j=1}^{K}\left\{\overline{u} \big(r,X_{r}^H,S_r,e_j \big) - \overline{u}(r,X_{r}^H,S_r,Y_{r-}) \right\}  (m^Y - \nu^Y) (\ud r,\{e_j\}) \\ &  + \int_{t_0}^{t}\int_{\mathcal{Z}}  e^{\int_t^r g(l,S_l,Y_l) \ud l} \left(\overline{u} \big(r,X_{r-}^H -(1-\theta_{r-})z,S_r,Y_r\big) - \overline{u}(r,X_{r-}^H,S_r,Y_r) \right)\\& \qquad  \times (m(\ud r, \ud z)- \lambda(r,Y_{r-}) F(\ud z) \ud r), 	\label{MG}
	\end{align}
and observe that integrability conditions (i), (ii), (iii), (iv) ensure that $M$ is an $(\bF,\P)$-martingale.
Now, since $\overline{u}$ solves the HJB-equation in \eqref{HJBu}--\eqref{HJBu1}, we have
\begin{align}
		& e^{\int_{t_0}^T g(r,S_r,Y_r) \ud r}\overline{u}(T,X_{T}^H,S_T,Y_{T}) \le e^{\int_{t_0}^t g(r,S_r,Y_r) \ud r}\overline{u}(t,x,s,e_i) \\
&+\int_{t}^{T}e^{\int_t^r g(l,S_l,Y_l) \ud l} \Pi_{r}\sigma(Y_{r})S_r^\beta\pd{\overline{u}}{x}(r,X_{r}^H,S_r,Y_{r})  \ud W_{r} \\ & + \int_{t}^{T}e^{\int_t^r g(l,S_l,Y_l) \ud l} \sigma(Y_{r})S_r^{\beta+1}\pd{\overline{u}}{s}(r,X_{r}^H,S_r,Y_{r})  \ud W_{r} \\ &  + \int_{t}^{T}\int_{\mathcal{Z}} \!\! e^{\int_t^r g(l,S_l,Y_l) \ud l} \sum_{j=1}^{K}\left\{\overline{u} \big(r,X_{r}^H,S_r,e_j \big) - \overline{u}(r,X_{r}^H,S_r,Y_{r-}) \right\}  (m^Y - \nu^Y) (\ud r,\{e_j\}) \\ &  + \int_{t}^{T}\int_{\mathcal{Z}}  e^{\int_t^r g(l,S_l,Y_l) \ud r} \left(\overline{u} \big(r,X_{r-}^H -(1-\theta_{r-})z,S_r,Y_r \big) - \overline{u}(r,X_{r-}^H,S_r,Y_r) \right)\\& \qquad  \times (m(\ud r, \ud z)- \lambda(r,Y_{r-}) F(\ud z) \ud r)
,\label{itoV2}
	\end{align} for every $H \in \A$.

Then, taking the conditional expectation with respect to $X_t^H=x$, $S_t=s$ and $Y_t=e_i$ on both sides of equation \eqref{itoV2} leads to
 \begin{equation}
		\mathbb{E}_{t,x,s,e_i} \Big[ e^{\int_{t_0}^T g(r,S_r,Y_r) \ud r}\overline{u}(T,X_{T}^H,S_T,Y_{T}) \Big] \le e^{\int_{t_0}^t g(r,S_r,Y_r) \ud r} \overline{u}(t,x,s,e_i).
	\end{equation} By the final condition in equation \eqref{HJBu1}, we obtain \begin{equation}
		\mathbb{E}_{t,x,s,e_i} \Big[ -e^{-\gamma X_T^H + \int_t^T g(r,S_r,Y_r) \ud r} \Big] \le \overline{u}(t,x,s,e_i),
	\end{equation} for every $H \in \A$. Hence, $u(t,x,s,e_i) \le \overline{u}(t,x,s,e_i)$, as we wanted. Finally, we observe that if $H \in \A$ is the maximizer in the HJB-equation \eqref{HJBu}, then the inequality above becomes an equality, which proves the second part of the statement.
\end{proof}

\subsection{The density $L_T$}

\begin{lemma}\label{lemma:L}
Let $T \ge 0$. Define the process $L=\{L_t, \ t \in [0,T]\}$ as
\[
L_t= e^{- \frac{1}{2}\int_{0}^t  \frac{\mu^2(Y_r)}{\sigma^2(Y_r) S_r^{2\beta}}\ud r - \int_{0}^t \frac{\mu(Y_r)}{\sigma(Y_r) S_r^{\beta}} \ud W_r};
\]
then, $L$ is an $(\bF,\P)$-martingale. Moreover, $L_T$ is the density of a probability measure $\widetilde \P$, equivalent to $\P$ on $\F_T$.
\end{lemma}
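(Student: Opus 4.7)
The plan is to identify $L$ as a stochastic exponential, show that it is a strictly positive continuous local (hence super-) martingale, and then upgrade to a true martingale via a Novikov-type integrability estimate; the change-of-measure claim follows at once from the Radon--Nikodym theorem. Concretely, I would set
\[
N_t := -\int_0^t \frac{\mu(Y_r)}{\sigma(Y_r) S_r^\beta}\, \ud W_r, \qquad t \in [0,T],
\]
and observe that since $W$ is independent of $Y$ and $S$ is strictly positive and continuous on the compact interval $[0,T]$, the integrand is $\bF$-progressively measurable and pathwise square integrable, so $N$ is a well-defined continuous $(\bF,\P)$-local martingale. By It\^o's formula $\ud L_t = L_t\, \ud N_t$, so $L$ is the Dol\'eans--Dade exponential of $N$; consequently $L$ is a strictly positive continuous local martingale with $L_0 = 1$, hence a nonnegative supermartingale satisfying $\mathbb{E}[L_t]\leq 1$ on $[0,T]$.

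To promote $L$ to a true martingale it suffices to show $\mathbb{E}[L_T] = 1$, which I would obtain via Novikov's criterion
\[
\mathbb{E}\left[\exp\left(\tfrac{1}{2}\int_0^T \tfrac{\mu^2(Y_r)}{\sigma^2(Y_r) S_r^{2\beta}}\,\ud r\right)\right] < \infty.
\]
Using the boundedness of $\mu/\sigma$ on $\mathcal E$ (observed in the paper after \eqref{s}), say $|\mu(e_j)/\sigma(e_j)| \leq K$ for every $j=1,\dots,K$, the criterion reduces to showing that $\mathbb{E}\big[\exp\big(\tfrac{K^2}{2}\int_0^T S_r^{-2\beta}\,\ud r\big)\big] < \infty$. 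This is immediate for $\beta = 0$; for $-1 < \beta < 0$ it amounts to bounding exponential moments of the integrated subquadratic power $\int_0^T S_r^{|2\beta|}\,\ud r$ with $|2\beta| \in (0,2)$.

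The principal obstacle is precisely this exponential moment, and I would handle it by localization. Set $\rho_n := \inf\{t \geq 0 : S_t \geq n\}$, which tends to $+\infty$ almost surely since the CEV process is non-explosive under $-1 < \beta \leq 0$. On $[0, T \wedge \rho_n]$ the integrand $S_r^{-2\beta}$ is uniformly bounded by $n^{-2\beta}$, so Novikov applies to the stopped process and gives $\mathbb{E}[L_{T \wedge \rho_n}] = 1$ for every $n \in \bN$. Passing to the limit then requires uniform integrability of $\{L_{T \wedge \rho_n}\}_{n \in \bN}$, which I would obtain from standard $L^p$ estimates for $\sup_{r \in [0,T]} S_r$ available for every $p > 0$ under the CEV specification (the strict inequality $|2\beta| < 2$ is crucial here), combined with a Jensen or power-series argument. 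Once $\mathbb{E}[L_T] = 1$ is established, $L$ is an $(\bF,\P)$-martingale on $[0,T]$, and setting $\ud\widetilde\P/\ud\P|_{\F_T} := L_T$ defines a probability measure $\widetilde\P$ equivalent to $\P$ on $\F_T$ by the Radon--Nikodym theorem.
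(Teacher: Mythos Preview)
Your approach differs substantially from the paper's, and the crucial step in your argument is not justified. The paper does \emph{not} attempt Novikov's criterion or any exponential-moment bound for $\int_0^T S_r^{-2\beta}\,\ud r$; instead it follows \citet{delbaen2002note} and constructs $\widetilde\P$ directly as the law of the pair $(Y,S)$ when $S$ solves the \emph{driftless} CEV equation $\ud S_t = \sigma(Y_t)S_t^{1+\beta}\,\ud\widetilde W_t$. With the stopping times $\eta_n=\inf\{t:\int_0^t S_r^{-2\beta}\,\ud r\geq n\}$ one sees that $\widetilde\P\ll\P$ on $\F_{\eta_n\wedge T}$ with density $L_{\eta_n\wedge T}$, and since $\int_0^T S_r^{-2\beta}\,\ud r<\infty$ $\P$-a.s.\ and $\widetilde\P$-a.s., the stopping times exhaust $[0,T]$ under both measures, yielding mutual absolute continuity on $\F_T$. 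The martingale property of $L$ and $\esp{L_T}=1$ then follow because $L_T$ is a genuine Radon--Nikodym density. The key point is that this argument requires only the \emph{almost-sure} finiteness of $\int_0^T S_r^{-2\beta}\,\ud r$, not any integrability of its exponential.

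The gap in your proposal is the uniform integrability of $\{L_{T\wedge\rho_n}\}_n$. You assert this follows from $L^p$ bounds on $\sup_{[0,T]}S_r$ via ``a Jensen or power-series argument'', but this is circular: any standard route to UI (e.g.\ bounding $\sup_n\esp{L_{T\wedge\rho_n}^{1+\varepsilon}}$) unravels into an exponential-moment estimate for $\langle N\rangle_T=\int_0^T\frac{\mu^2(Y_r)}{\sigma^2(Y_r)}S_r^{-2\beta}\,\ud r$, which is exactly the Novikov condition you were trying to sidestep. Polynomial moments of $\sup S$ do not give exponential moments of $\int S^{-2\beta}$; for the CEV model with $-1<\beta<0$ this exponential integrability is genuinely delicate (and is precisely why Delbaen--Shirakawa introduced the path-space argument). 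So as written, your localization does not close, and the paper's weak-solution approach is the substantive idea that makes the lemma work.
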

\begin{proof}
For the ease of notation we now take $t_0=0$. The proof extends that of Theorem 2.3 in \cite{delbaen2002note} to the regime-switching version of the CEV model. We summarize the main steps.
Consider the couple $(Y,S)$ where $Y$ is a finite state Markov chain with infinitesimal generator $Q$, and $S$ is a process with continuous trajectories. Consider the following equations
\begin{equation}
\ud S_t= S_t \mu(Y_t) \ud t + S^{1+\beta}_t \sigma(Y_t) \ud \overline{W}_t
\end{equation}
and 
\begin{equation}
\ud S_t= S^{1+\beta}_t \sigma(Y_t) \ud \overline{W}_t,
\end{equation}
where $\overline W$ is a Wiener measure. Next we denote by $\P$ the law of the couple $(Y,S)$, where $S$ satisfies the first equation, on the interval $[0, T ]$ and by $\widetilde \P$ the law of the couple $(Y,S)$, where $S$ satisfies the first equation, on the interval $[0, T ]$. Notice that the generator of the Markov chain $Y$ is the same under $\P$ and under $\widetilde \P$. Then, we can find a $\P$-Brownian motion  $W$ and  a $\widetilde \P$-Brownian motion $\widetilde W$, both independent of $Y$, such that
\begin{equation}
\ud S_t= S_t \mu(Y_t) \ud t + S^{1+\beta}_t \sigma(Y_t) \ud W_t
\end{equation}
and 
\begin{equation}
\ud S_t= S^{1+\beta}_t \sigma(Y_t) \ud \widetilde{W}_t.
\end{equation}
We denote by $\mathcal \bF^{Y,S}$ the filtration generated by the pair $(Y,S)$. Notice that, for instance, this coincides with the filtration generated by the processes $(Y,W)$, and moreover, because of independence of the process $(Y,S)$ with the jump measure $m(\ud t, \ud z)$, describing the jumps of the claim process, we can extend our analysis to the whole filtration $\bF$.
The laws $\P$ and $\widetilde \P$ are measures on the product space $\mathcal M \times \mathcal C$, where $\mathcal M$ is the space of piecewise continuous functions of $[0,T]$ and $\mathcal C$ is the space of continuous functions on $[0,T]$.
To show that $\P$ and $\widetilde \P$ are equivalent we define the sequence of stopping times
\[
\eta_n=\inf\left\{t>0: \ \int_0^{t} S^{-2\beta}_r \ud r\geq n\right\}.
\]
Clearly, $\eta_n \to +\infty$ (since $0\leq -\beta< 1$) and the density of $\widetilde \P$ with respect to $\P$ on $\mathcal F_{\eta_n\wedge T}$ is given by
\[
L_{\eta_n\wedge T}= e^{- \frac{1}{2}\int_{0}^{\eta_n\wedge T}  \frac{\mu^2(Y_t)}{\sigma^2(Y_t) S_t^{2\beta}}\ud t - \int_{0}^{\eta_n \wedge T} \frac{\mu(Y_t)}{\sigma(Y_t) S_t^{\beta}} \ud W_t}.
\]
Because $\int_0^T S^{-2\beta}_t \ud t<+\infty$ $\P$-a.s. and $\frac{\mu(Y_t)}{\sigma(Y_t)}$ is bounded for every $t\ge 0$, we have that $\widetilde \P$ is absolutely continuous with respect to $\P$ on $\F_T$. Conversely, we can repeat the same reasoning and use that  $\int_0^T S^{-2\beta}_t \ud t<+\infty$ $\widetilde \P$-a.s., which implies equivalence.
Hence $L$ is a strictly positive martingale with $\esp{L_T}=1$.

We also observe that, the change of measure above does not alter the law (i.e. the infinitesimal generator) of the Markov chain $Y$ nor the law (i.e. the compensator) of the jump process $C$. 
Hence $Y$ and $C$ have the same law under $\P$ and $\widetilde \P$.
\end{proof}

\subsection{Proof of Proposition \ref{prop:backward}}\label{app:thm_backward}

We notice that the optimization is taken over the set of admissible functions $\mathcal A$, even though in the backward case one would require that $\esp{e^{-\gamma X^H_T}}<\infty$ in place of $\esp{e^{-\gamma X^H_T+h(t_0,T)}}<\infty$. However, because of the assumptions on the model coefficients these two conditions are equivalent.

Suppose that the value function $V(t,x,s,e_i)$ is $\mathcal C^1$ in $t$ and $\mathcal C^2$ in $(x,s)$, for each $i=1,\ldots,K$, then it solves the equation
\begin{align}\sup_{H \in \mathcal A}\mathcal{L}^HV(t,x,s,e_i)=0, \quad (t,x,s,e_i)\in [0,T)\times \R \times (0,+\infty) \times \mathcal E,\label{eq:hjb_back}\end{align}
where $\mathcal L^H$ is the infinitesimal generator given in \eqref{mgenxy}, with the terminal condition $V(T,x,s,e_i)=-e^{-\gamma x}$. We guess that the value function has the form $V(t,x,s,e_i)=-e^{-\gamma x +J_1(t)s^{-2\beta} +J_2(t,e_i)}$. Plugging this expression into \eqref{eq:hjb_back} and taking the first order condition on $\Pi$ yields \eqref{eq:strategia_backward_excomp}. The second order conditions guarantee that $\Pi^{B,*}$ is the optimal investment strategy. For the optimal reinsurance strategy  $\theta^{B,*}(t,e_i)$ we argue as in the proof of Proposition \ref{prop:optimal} and hence we get that  $\theta^{B,*}(t,e_i)=\theta^*(t,e_i)$  given in equation \eqref{optr}.\\
Next, we establish a verification result. Let $v(t,x,s,e_i)$ be a solution of the equation \eqref{eq:hjb_back} with the final condition $v(T,x,s,e_i)=-e^{-\gamma x}$ (that is $v(T,x,s,e_i)=V(T,x,s,e_i)$). Then, by It\^o's formula it holds that (for simplicity, we omit the dependence of $X$ on the strategy $H$)
\begin{align}
	&v(T, X_T, S_T,Y_T)=v(t,x,s,e_i)+\int_t^T\!\mathcal{L}^Hv(r, X_r,S_r,Y_r) \ud r \\ & + \int_t^T \!\Pi_r \sigma(Y_r)S_r^\beta\frac{\partial v}{\partial x}(r, X_r,S_r,Y_r) \ud W_r + \int_t^T\! \sigma(Y_r)S_r^{\beta+1}\frac{\partial v}{\partial s}(r, X_r,S_r,Y_r)  \ud W_r \\ & +  \int_t^T \! \sum_{j=1}^{K} \Big{\{} v(r, X_{r}, S_r,e_j)-v(r, X_{r}, S_r,Y_{r-}) \Big{\}} \left(m^Y-\nu^Y)(\ud r,\{e_j\} \right) \\&+ \int_t^T \!\!\!\!\int_{{\mathcal Z} }\!\! \big{\{} v(r, X_{r-\!}\!-\!(1-\theta_{r-\!})z, S_r,Y_r)-v(r, X_{r-\!}, S_r,Y_r) \big{\}} \! (m(\ud r, \ud z)-\lambda(r, Y_{r-}) F(\ud z) \ud r).
\end{align}
Since $v$ satisfies equation \eqref{eq:hjb_back}, we get that
\begin{align}
	&v(T, X_T, S_T,Y_T)\leq v(t,x,s,e_i) \\ &  + \int_t^T  \Pi_r \sigma(Y_r) S_r^\beta \frac{\partial v}{\partial x}(r, X_r,S_r, Y_r) \ud W_r + \int_t^T  \sigma(Y_r) S_r^{\beta+1} \frac{\partial v}{\partial s}(r, X_r,S_r, Y_r) \ud W_r \\
	&  +  \int_t^T \! \sum_{j=1}^{K} \Big{\{} v(r, X_{r}, S_r,e_j)-v(r, X_{r}, S_r,Y_{r-}) \Big{\}} \left(m^Y-\nu^Y)(\ud r,\{e_j\} \right) \\& + \int_t^T \!\!\!\!\int_{{\mathcal Z} }\!\! \big{\{} v(r, X_{r-\!}\!-\!(1-\theta_{r-\!})z, S_r,Y_r)-v(r, X_{r-\!}, S_r,Y_r) \big{\}} \! (m(\ud r, \ud z)-\lambda(r, Y_{r-}) F(\ud z) \ud r).\quad{}\ \label{eq:ineq:1}
\end{align}
Let
\begin{align}
	&M_t=\int_{t_0}^t  \Pi_r \sigma(Y_r)S_r^\beta \frac{\partial V}{\partial x}(r, X_r, S_r,Y_r) \ud W_r + \int_{t_0}^t \sigma(Y_r)S_r^{\beta+1} \frac{\partial V}{\partial s}(r, X_r, S_r,Y_r) \ud W_r \\
	&  +  \int_{t_0}^t \! \sum_{j=1}^{K} \Big{\{} V(r, X_{r}, S_r,e_j)-V(r, X_{r}, S_r,Y_{r-}) \Big{\}} \left(m^Y-\nu^Y)(\ud r,\{e_j\} \right) \\
	& + \int_{t_0}^t \!\!\int_{{\mathcal Z} }\!\! V(r, X_{r-}-(1-\theta_{r-})z, S_r,Y_r)-V(r, X_{r-}, S_r,Y_r) \left(m(\ud r, \ud z)-\lambda(r, Y_{r-}) F(\ud z) \ud r\right).
\end{align}
{If $M$ is an $(\bF,\P)$-martingale, then taking the conditional expectation given $X_t=x$, $S_t=s$, $Y_t=e_i$} on both sides of inequality \eqref{eq:ineq:1} yields
\[
V(t,x,s,e_i)\leq v(t,x,s,e_i),
\]
and the equality holds if $H$ is a maximizer of equation \eqref{eq:hjb_back}.
Then, it only remains to prove that the function $V(t,x,s,e_i)=-e^{-\gamma x +J_1(t)s^{-2\beta} +J_2(t,e_i)}$ is such that the process $M$ is an $(\bF,\P)$-martingale. To this aim, observe that $J_1(t)$ and $J_2(t,e_i)$ are both bounded in $[t_0,T]$ and we consider the localizing sequence of random times $\{\widetilde \tau_{n}\}_{n \in \bN}$ defined as
\[
\widetilde \tau_{n}:= \inf \Big{\{} t \ge t_0 : \ S_t^{-2\beta} > n ,\ X_t<-n  \Big{\}}, \quad n\in \bN.
\]
Then, $\{\widetilde \tau_{n}\}_{n \in \bN}$  is an increasing sequence, $\lim_{n \to \infty}\widetilde \tau_n \wedge T =T$ and hence we get that
\begin{align}
	&\mathbb{E}\bigg[\int_{t_0}^{T\wedge\widetilde\tau_n} \gamma^2\sigma^2  \Pi_r^2 S^{2\beta}_r V^2(r, X_r,S_r, Y_r) \ud r\bigg]\\
&+ \mathbb{E}\bigg[\int_{t_0}^{T\wedge\widetilde\tau_n} 4\beta^2\sigma^2  J_1^2(r) S^{-2\beta}_r V^2(r, X_r,S_r, Y_r) \ud r\bigg]\\
&+ \mathbb{E}\bigg[\int_{t_0}^{T\wedge\widetilde\tau_n} \Big{|}V(r, X_r,S_r, Y_{r-})\Big{|} \sum_{j=1}^{K}\left(e^{J_2(r, e_j)-J_2(r, Y_{r-})}-1\right)\nu^Y(\ud r,\{e_j\}) \bigg]\\
&+ \mathbb{E}\bigg[\int_{t_0}^{T\wedge\widetilde\tau_n} \lambda(r, Y_{r-}) \Big{|}V(r, X_{r-},S_r, Y_{r}) \Big{|} \sup_{z \in \mathcal{Z}} \left(e^{\gamma(1-\theta_r)z}-1\right) \ud r\bigg]<\infty,
\end{align}
which concludes the proof.



\end{document}